\documentclass[%
reprint,
amsmath,amssymb,
aps,
]{revtex4-2}

\usepackage{graphicx}
\usepackage{dcolumn}
\usepackage{amsmath,amsfonts,amssymb,amsthm,epsfig,amscd,comment,latexsym,psfrag}
\usepackage{nicefrac,xspace,tikz}
\usepackage{physics}
\usepackage{booktabs}
\usepackage{makecell}
\usepackage{amsfonts}
\usepackage{footmisc}
\usepackage[colorlinks=true,linkcolor=blue,citecolor=blue,urlcolor=blue]{hyperref}
\usetikzlibrary{arrows.meta, positioning,shapes.geometric}
\usepackage{bm}
\usepackage{pgfplots}
\pgfplotsset{compat=1.17}
\usepackage{siunitx}
\usepackage {setspace}

\RequirePackage{fix-cm}

\def\D{\mathsf D}

\def\C{\mathbb C}

\def\1{{\bf{1}}}

\def \<{\langle}
\def \>{\rangle}

\usepackage{etex}
\usepackage{graphicx}
\usepackage{amsmath}
\usepackage{floatflt}
\usepackage{amsfonts}
\usepackage{amssymb}
\usepackage{graphicx}
\usepackage{psfrag}
\usepackage{epsf}
\usepackage{slashed}
\usepackage{booktabs}
\usepackage{mathptmx}
\usepackage{latexsym}
\usepackage{subfigure}
\usepackage [latin1]{inputenc}
\def\footnoterule{\kern 1mm \hrule width 7cm \kern 2.2mm}%

\newcommand{\bea}{\begin{eqnarray}}
\newcommand{\eea}{\end{eqnarray}}
\newcommand{\be}{\begin{equation}}
\newcommand{\ee}{\end{equation}}
\newcommand{\on}{\operatorname}
\usetikzlibrary{arrows}
\usetikzlibrary{decorations.markings}

\newcommand{\kb}[2]{\ketbra{#1}{#2}}
\newcommand{\bk}[2]{\braket{#1}{#2}}

\def\D{\mathsf D}

\def\C{\mathbb C}

\def\1{{\bf{1}}}

\def \<{\langle}
\def \>{\rangle}
\def\footnoterule{\kern 1mm \hrule width 7cm \kern 2.2mm}

\newtheorem{thm}{Theorem}[section]
\newtheorem{prop}[thm]{Proposition}

\newtheorem{eg}[thm]{Example}
\newtheorem{rem}[thm]{Remark}
\renewcommand{\footnoterule}{\vspace*{-1mm}\hrule width 3cm height 0.1mm\vspace*{5mm}}

\begin{document}
	
\preprint{APS/123-QED}
	
\title{Quantum-state block texture and its quantification}
\author{ Yanjun Chu$^{1,*}$, Yuhang Xie$^{1}$, Chenyang Cui$^{1}$, Shao-Ming Fei$^{2,\dagger}$\\
1. School of Mathematics and Statistics, Henan University, \\
Kaifeng, 475004, China\\
2. School of Mathematics Sciences, Capital Normal University, \\Beijing 100048, China}
\date{\today}
\thanks{Contact author: chuyj@henu.edu.cn;
$^\dagger$Contact author:feishm@cnu.edu.cn}	

\begin{abstract}
Quantum-state texture (QST) is an emerging quantum resource that has garnered increasing attention amid advances in quantum theory. In this work, we generalize the QST to quantum-state block texture (QSBT). This generalization provides profound operational interpretations for quantifying the advantages of quantum states in quantum information processing. We pioneer an alternative framework for characterizing and quantifying quantum-state block texture, and propose three types of block texture measures. By comparing these QSBT measures, we investigate their distinctions and interrelationships. We demonstrate that the geometric measure serves as an upper bound for the trace distance-based measure. For a specific family of quantum states, we evaluate the values of two trace distance-based measures. Then we sample four sets of data from this family of states, with each set comprising $5\times 10^4, 10^5, 5\times 10^5$ and $10^6 $ samples, respectively, and present the corresponding distributions. Our results reveal that the QSBT measures constructed via different approaches show distinct characteristics, indicating their potential roles in quantifying the block texture of quantum states.
\end{abstract}
	
\maketitle
	
	
\section{INTRODUCTION}
The notion of quantum state texture (QST) was first proposed by Parisio \cite{parisio 1} as a novel quantum resource intrinsically associated with coherence. A straightforward interpretation of QST is as follows: For a given computational basis $\{|i\rangle\}$, the density matrix $\rho$ can be visualized as a three-dimensional plot, where the row and column indices of the matrix correspond to the horizontal dimensions, and the magnitudes of the real and imaginary parts of each matrix entry $\rho_{jk}$ serve as the vertical  altitude in two complementary sub-plots. In this geometric framework, each quantum state is mapped to a pair of real-valued three-dimensional plots (one for the real part and one for the imaginary part  of $\rho$), and the unevenness of these plots--deviations from a flat distribution of matrix entries--characterizes the QST of the state.
	
From this geometric perspective, the resulting plots typically exhibit unevenness--an attribute that defines the texture of the quantum state. Among all such plots, the simplest corresponds to a perfectly flat distribution, where the altitude (magnitude of matrix entries) is uniform across all coordinates. The only existing quantum state which gives rise to such a plot is
\begin{equation}\label{e1}
f_1 = |f_1\rangle\langle f_1|,
\end{equation}
where $|f_1\rangle = \frac{1}{\sqrt{d}} \sum\limits_{i=0}^{d-1} |i\rangle$ and $d$ is the dimension of Hilbert space. We refer to this state as textureless, and it uniquely constitutes the textureless state set  of the proposed theory. Parisio \cite{parisio 1} demonstrated that circuit layers containing at least one CNOT gate can be fully characterized using randomized input states and analyzing the texture of the output qubits. Notably, this characterization process obviates the need for tomographic protocols or ancillary systems. Therefore, QST shares similarities with quantum coherence, which is a key component in emerging quantum technologies, such as quantum metrology \cite{giovannetti 2,giovannetti 3}, quantum computing \cite{hillery 4}, nanoscale thermodynamics \cite{gour 5, lostaglio 6, narasimhachar 7, francica 8} and biological systems \cite{parisio 1,lloyd 9,giovannetti 11}. 

Recently, the quantification of QST has attracted significant interest as a novel quantum resource theory. Several potential QST measures have been proposed in Ref. \cite{wang2025}. The authors in Ref. \cite{zhang2025} have explored the quantification of QST and established uncertainty relations for multiple candidate measures based on the Tsallis relative entropy and weight. R. Muthuganesan \cite{Muthuganesan2025} developed and analyzed several new QST measures based on the Hellinger distance, Jensen-Shannon divergence and Wigner-Yanase-Dyson skew information, and established a no-broadcasting theorem for the Jensen-Shannon divergence-based QST measure, highlighting its operational significance in information-processing tasks. Meanwhile, the authors of Ref. \cite{cui2025} addressed three key aspects of the resource theory of quantum-state texture: the quantification of QST, the transformation of quantum states under free operations, and the interrelationships among these quantum resources.
 
Generally, a quantum resource theory contains three basic elements: free states, free operations and resource measures (monotones). In the resource theory of coherence, the free states are incoherent states, i.e., diagonal states with respect to a fixed orthonormal basis \cite{baumgratz2014}. In the resource theory of imaginarity, the free states are real states under a fixed orthonormal basis \cite{Hickey2018}. Endowed with a rich set of free states, these resource theories give rise to diverse free operations and resource measures, thereby facilitating their broader applications in quantum information processing. As a novel resource theory, QST  also encompasses these three fundamental components \cite{parisio 1}. However, the only textureless state is $f_1$ (\ref{e1}) in quantum texture resource theory. Its free operations are also constrained to preserve this unique textureless state, rendering the quantum texture resource theory quite different from the resource theories of coherence and  imaginarity.

In this work, analogous to the theories of POVM coherence \cite{Bischof2019} and Block coherence \cite{Xu2020}, we generalize the concept of quantum-state texture, a quantum resource describing the ``unevenness" of density matrix entries in a given orthonormal basis, to the context of quantum-state block texture (QSBT), by extending the set of free states originally consisting of a single element $f_1$ (the textureless state) to a set composed of infinitely many block matrices with block textureless states as the zero-resource set.

The QSBT is valid for arbitrary single and multipartite quantum systems.  In this broader framework, we quantify QSBT by proposing several QSBT measures. For any given QST measure, we may construct a corresponding QSBT measure in terms of the reduced states of the subsystem. Moreover, we propose quantifiers for QSBT based on the concave functions and convex roof construction, a traditional and effective approach in the quantification of entanglement \cite{vidal2000}, coherence \cite{du2015} and imaginarity \cite{du2025}. In general we also present QSBT measures including the geometric measure of QSBT ($M_g$), the trace distance-based measure of QSBT ($M_{\tr}$), the fidelity-based measure of QSBT ($M_F$) and the relative entropy of QSBT ($M_r$). We demonstrate that the geometric measure $M_g$ serves as an upper bound for the trace distance-based measure $M_{\tr}$. For a specific family of quantum states, we evaluate two trace distance-based measures introduced in this work. We sample four sets of data from this family of states, with each set comprising $5\times 10^4, 10^5, 5\times 10^5$ and $10^6 $ samples, respectively. The distributions corresponding to these four sets are analyzed accordingly.  Our results show that the QSBT measures constructed via different approaches are distinct, indicating that they are poised to fulfil distinct roles in quantifying the block texture of quantum states.

\section{Quantum-state block texture of a general  quantum system}\label{sec:basics}

We generalize the definition of  the quantum-state texture to the quantum-state block texture.
Let $\mathcal{H}$ be an $N$-dimensional Hilbert space, with $\left|x_i\right\rangle$, $i=1, \ldots, N$, being an orthonormal basis. Denote $\on{M}(\mathcal{H})$  the  algebra consisting of all matrices and $\mathcal{D}(\mathcal{H})$ the convex set consisting of all quantum states (density matrices) on the Hilbert space $\mathcal{H}$.
Given a factor $n$ of $N$, let $m=\frac{N}{n}$ of $N$ be another factor. We define a family of quantum states, 
\begin{equation}\label{e04}
f_1^{G}=\frac{1}{m}
\left(
\begin{array}{llll}
G & G & \cdots & G\\
G & G & \cdots & G\\
\ \vdots & \ \vdots & \cdots & \ \vdots\\
G & G & \cdots &G
\end{array}
\right)
\end{equation}
for arbitrary  $n\times n$ matrices $G$.

\begin{prop}\label{p2.1}
$f_1^{G}$ is a quantum state in $\mathcal{H}$ if and only if  $G$ is positive semi-definite with unit trace, i.e., 
$G$ is an $n\times n$ density matrix.
\end{prop}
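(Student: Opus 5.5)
The plan is to write $f_1^{G}$ as a Kronecker product and reduce everything to properties of $G$. Let $J_m$ be the $m\times m$ all-ones matrix. Then
\begin{equation*}
f_1^{G}=\frac{1}{m}\,J_m\otimes G = |f_1^{(m)}\rangle\langle f_1^{(m)}|\otimes G,
\end{equation*}
where $|f_1^{(m)}\rangle=\frac{1}{\sqrt m}\sum_{i=0}^{m-1}|i\rangle$ is the $m$-dimensional textureless vector of (\ref{e1}). This holds because the $(i,j)$ block of $\frac1m J_m\otimes G$ is $\frac1m G$ for every $i,j$. Under the identification $\mathcal{H}\cong\mathbb{C}^m\otimes\mathbb{C}^n$ induced by the ordered basis $\{|x_i\rangle\}$, this is exactly the block matrix (\ref{e04}). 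The two defining properties of a state, trace one and positive semi-definiteness, can then be checked factor by factor.

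The trace condition is immediate from $\operatorname{tr}(A\otimes B)=\operatorname{tr}A\,\operatorname{tr}B$. It also follows by summing the diagonal blocks directly: $\operatorname{tr} f_1^{G}=m\cdot\frac1m\operatorname{tr}G=\operatorname{tr}G$. Hence $\operatorname{tr}f_1^{G}=1$ if and only if $\operatorname{tr}G=1$.

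For positivity, the "if" direction uses the fact that the tensor product of two positive semi-definite matrices is positive semi-definite, since its eigenvalues are the pairwise products of the factors' eigenvalues. The projector $|f_1^{(m)}\rangle\langle f_1^{(m)}|$ is positive semi-definite, so $G\ge0$ gives $f_1^{G}\ge0$.

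For the "only if" direction I would avoid spectral arguments, which could be delicate if $G$ is not assumed Hermitian. Instead, for an arbitrary $|v\rangle\in\mathbb{C}^n$ I would test positivity of $f_1^{G}$ on the vector $|f_1^{(m)}\rangle\otimes|v\rangle$:
\begin{equation*}
\big(\langle f_1^{(m)}|\otimes\langle v|\big)\, f_1^{G}\,\big(|f_1^{(m)}\rangle\otimes|v\rangle\big)=\langle v|G|v\rangle .
\end{equation*}
Since $f_1^{G}\ge0$, this gives $\langle v|G|v\rangle\ge0$ for all $v$. Over $\mathbb{C}$ that already forces $G$ to be Hermitian and positive semi-definite. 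Equivalently, one can compress with the isometry $V:|v\rangle\mapsto|f_1^{(m)}\rangle\otimes|v\rangle$, which gives $V^\dagger f_1^{G}V=G$, and positivity is preserved under compression.

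The only step needing some care is this converse direction, in particular recovering Hermiticity of $G$ when the statement allows arbitrary $n\times n$ matrices. The choice of test vector resolves it, because positivity of the quadratic form over complex vectors implies Hermiticity. Combining the trace and positivity parts, $f_1^{G}$ is a state if and only if $G$ is an $n\times n$ density matrix.
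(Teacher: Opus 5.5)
Your proof is correct, but it follows a different route from the paper. The paper does not use the tensor structure at this point. Instead it performs one block Gaussian elimination: it conjugates the unnormalized all-$G$ block matrix by the invertible matrix $P$, which has $I_n$ on the block diagonal and $-I_n$ in the first block column below it, and obtains $P\,(m f_1^G)\,P^\dagger=\mathrm{diag}(G,0,\dots,0)$. Congruence by an invertible matrix preserves and reflects both Hermiticity and positive semidefiniteness, so this single computation settles both directions at once. There is no separate converse argument and no Hermiticity issue to handle.

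You instead factor $f_1^G=|f_1^{(m)}\rangle\langle f_1^{(m)}|\otimes G$ and use a different tool for each direction: positivity of tensor products for ``if'', and compression by the isometry $V=|f_1^{(m)}\rangle\otimes I_n$ for ``only if''. Your compression is in effect a non-square version of reading off the top-left block after the paper's congruence.

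Your route gains conceptual clarity. It shows that the block-textureless states of Section II are exactly the states $f_1^A\otimes\rho_B$ of (\ref{e4})--(\ref{e5}) under $\mathcal{H}\cong\mathbb{C}^m\otimes\mathbb{C}^n$, which the paper only asserts later. It also gives the spectrum of $f_1^G$ (that of $G$ padded with zeros) for free. The paper's route is more self-contained: one explicit matrix product proves the equivalence with no appeal to facts about Kronecker products.
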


\begin{proof} Since $f_1^G$ is a quantum state, we have $\Tr(G)=1$ by $\Tr(f_1^G)=1$. From
$$
\begin{array}{llll}
\left(
\begin{array}{llll}
\ \ I_n & 0 & \cdots & 0 \\
-I_n & I_n & \cdots &0 \\
\ \ \vdots & \ \vdots & \cdots & \ \vdots \\
-I_n & 0 & \cdots & I_n
\end{array}
\right)
\left(
\begin{array}{llll}
G & G & \cdots & G \\
G & G & \cdots & G \\
\ \vdots & \ \vdots & \cdots & \ \vdots \\
G & G & \cdots & G
\end{array}
\right)
\left(
\begin{array}{llll}
I_n & -I_n & \cdots &-I_n \\
0 & \ \ I_n & \cdots & \ \ 0 \\
\vdots & \ \ \vdots & \cdots & \ \ \vdots \\
0 & \ \ 0 & \cdots & \ \ I_n
\end{array}
\right)\\
=
\left(
\begin{array}{llll}
G & \ \ 0 & \cdots & \ \ 0 \\
0 & \ \ 0 & \cdots & \ \ 0 \\
\vdots & \ \ \vdots & \cdots & \ \ \vdots \\
0 & \ \ 0 & \cdots &\ \ 0
\end{array}
\right),
\end{array}
$$
we obtain that $f_1^G$ is a positive semi-definite matrix if and only if $G$ is positive semi-definite, where $I_n$ denotes the $n\times n$ identity matrix.
\end{proof}

Given an arbitrary $n\times n$ density matrix $G$, the corresponding quantum 
state $f_1^G$  is called block textureless on $ \mathcal{H}$.
We denote the set of all block textureless quantum states in $\mathcal{H}$ by 
$\mathcal{T}_B(\mathcal{H})$, 
\begin{equation}\label{e05}
\mathcal{T}_B(\mathcal{H})=\left\{
f_1^G|  G\in \mathcal{D}(\C^n)\right\},
\end{equation}
where $\mathcal{D}(\C^n)$ is the set of all density matrices on $\C^n$.
Obviously, when $n=1$, the quantum-state block texture reduces to the quantum-state texture given in 
Ref. \cite{parisio 1}. 

\section{An alternative framework for quanfying the block texture}
We can naturally introduce the concept of quantum-state block texture in bipartite systems. 
Let $\mathcal{H}_A$ (respectively $\mathcal{H}_B$ ) be an $m$ ($n$ )-dimensional Hilbert space, with $\left|e_i\right\rangle$, $i=1, \ldots, m$ ($\left|h_j\right\rangle$, $j=1, \ldots, n$ ) as an orthonormal basis.
Denote
\begin{equation}\label{e3}
f_1^A=\frac{1}{m}\sum\limits_{i=1}^{m}\sum\limits_{j=1}^{m}|e_i\rangle\langle e_j|.
\end{equation}
The bipartite state $f_1^A\otimes \rho_B$ in $\mathcal{H}_A \otimes \mathcal{H}_B$ has the following  form 
\begin{equation}\label{e4}
f_1^A\otimes \rho_B=\frac{1}{m}
\left(
\begin{array}{llll}
\rho_B & \rho_B & \cdots & \rho_B\\
\rho_B & \rho_B & \cdots & \rho_B\\
\ \ \vdots & \ \ \vdots & \cdots & \ \ \vdots\\
\rho_B & \rho_B & \cdots & \rho_B
\end{array}
\right),
\end{equation}
which is  block textureless in $\mathcal{H}_A \otimes \mathcal{H}_B$ with respect to $\left|e_i\right\rangle$, $i=1, \ldots, d_A$ of the subsystem $\mathcal{H}_A$. We define the set of all block textureless quantum states on $\mathcal{H}_A \otimes \mathcal{H}_B$ by $\mathcal{T}_B(\mathcal{H}_A\otimes\mathcal{H}_B)$, 
\begin{equation}\label{e5}
\mathcal{T}_B(\mathcal{H}_A\otimes\mathcal{H}_B)=\left\{
f_1^A\otimes \rho_B|\rho_B\in \mathcal{D}(\mathcal{H}_B)\right\}.
\end{equation}
In the framework of quantum-state block texture resource theory, $\mathcal{T}_B(\mathcal{H}_A\otimes\mathcal{H}_B)$ is the set of free states.  For arbitrary $\lambda\in [0,1]$, we have 
\begin{align*}
\lambda f_1^A\otimes \rho^1_B+&(1-\lambda )f_1^A\otimes \rho^2_B\\&=f_1^A\otimes (\lambda\rho^1_B+(1-\lambda )\rho^2_B\in \mathcal{T}_B(\mathcal{H}_A\otimes\mathcal{H}_B).
\end{align*}
Hence, the set $\mathcal{T}_B(\mathcal{H}_A\otimes\mathcal{H}_B)$ is a convex set.

We next establish a framework for quanfying the block texture  of bipatite quantum systems.
 For an arbitrary quantum-state block texture (QSBT) measure $\mathcal{P}$,  it must satisfy the following conditions:

(T1) Nonnegativity: $\mathcal{P}(\rho^{AB})\geqslant 0$ for arbitrary $\rho^{AB}\in \mathcal{D}(\mathcal{H}_{A}\otimes \mathcal{H}_{B}$), and $\mathcal{P}(\sigma^{AB})=0$ if and only if $\sigma^{AB}\in \mathcal{T}_B(\mathcal{H}_A\otimes\mathcal{H}_B)$;

(T2) Monotonicity: $\mathcal{P}(\rho^{AB})\geqslant \mathcal{P}(\Phi(\rho^{AB}))$ for any quantum channel $\Phi$ such that $\Phi(\sigma^{AB})\in \mathcal{T}_B(\mathcal{H}_A\otimes\mathcal{H}_B)$ for arbitrary $\sigma^{AB}\in \mathcal{T}_B(\mathcal{H}_A\otimes\mathcal{H}_B)$;

(T3) Convexity: $\mathcal{P}(\sum\limits_ip_i\rho_i^{AB})\leqslant \sum\limits_ip_i\mathcal{P}(\rho_i^{AB})$, $\sum\limits_ip_i=1$, $p_i \geqslant 0$.
 
Analogous to quantum coherence \cite{baumgratz2014,yu2016} and quantum imaginarity \cite{xue2021}, we also investigate the following conditional desirable property:

(T4) Strong monotonicity: $\mathcal{P}(\rho^{AB})\geqslant\sum\limits_i p_i \mathcal{P}(\rho^{AB}_i)$, where $p_i=\Tr(K_i\rho^{AB}K_i^{\dagger}), \rho^{AB}_i=\flatfrac{K_i\rho^{AB}K_i^{\dagger}}{p_i}$, and $\{K_i\}$ satisfies $K_i\mathcal{T}_B(\mathcal{H}_A\otimes\mathcal{H}_B)K_i^{\dagger}\subseteq\mathcal{T}_B(\mathcal{H}_A\otimes\mathcal{H}_B)$.
%

\subsection{Quantum-state block texture measures induced by  quantum-state  texture measures}
Under the framework of quantum-state block texture (QSBT), we can  establish the connection between QSBT measures for bipartite systems and general QST measures. If $\mathcal{Q}$ is a QST measure,  given an arbitrary  $\rho^{AB}\in \mathcal{D}(\mathcal{H}_A\otimes \mathcal{H}_B)$, 
we define a function 
$\mathcal{Q}^{AB}$ on the system $\mathcal{H}_A\otimes \mathcal{H}_B$ as follows,
\begin{equation}\label{e7}
\mathcal{Q}^{AB}(\rho^{AB}):=\mathcal{Q}(\Tr_B(\rho^{AB})).
\end{equation}

Before presenting the following theorem, we first introduce a fact: for a bipartite quantum state \(\sigma^{AB}\), if its partial trace over system \(B\) satisfies
\[
\operatorname{Tr}_B\bigl(\sigma^{AB}\bigr)
=
\ket{\phi}\!\bra{\phi}_A,
\]
where $\ket{\phi}\!\bra{\phi}_A$ is a pure state on system $\mathcal{H}_A$, then the quantum state must be of the form
\[
\sigma^{AB}
=
\ket{\phi}\!\bra{\phi}_A \otimes \tau^B,
\]
where \(\tau^B\) is a quantum state on system \(\mathcal{H}_B\).

In fact, we have the spectrum decomposition of the quantum state $\sigma^{AB}$ as follows
\begin{equation*}
\sigma^{AB}=\sum_n p_n\ket{\Phi_n}\!\bra{\Phi_n}_{AB},\quad p_n\geqslant 0,\quad \sum_n p_n=1.
\end{equation*}

For each $\ket{\Phi_n}_{AB}$, by using Schimidt decomposition we can have
\(
\ket{\Phi_n}_{AB}=\sum_k\sqrt{\lambda_{n,k}}\ket{u_{n,k}}_A\otimes\ket{v_{n,k}}_B,
\)
so
\[
\sigma^{AB}=\sum_n p_n\sum_{i,j}\sqrt{\lambda_{n,i}\lambda_{n,j}}\ket{u_{n,i}}\!\bra{u_{n,j}}_A\otimes\ket{v_{n,i}}\!\bra{v_{n,j}}_B.
\]

By taking partial trace with respect to system $\mathcal{H}_B$, one have
\begin{equation*}
\Tr_B(\sigma^{AB})=\sum_{n,k}p_n \lambda_{n,k}\ket{u_{n,k}}\!\bra{u_{n,k}}_A:=\sum_m q_m\ket{u_m}\!\bra{u_m}_A,
\end{equation*}
where $m:=(n,k),q_m:=p_n\lambda_{n,k}$, and $\{q_m\}$ is a probability distribution.

Thus, if the above expression equals $\ket{\phi}\!\bra{\phi}_A$, it follows from the fact that a pure state cannot be written as a nontrivial convex combination of states that all $\ket{u_m}_A$ must be coincide with $\ket{\phi}_A$.

Consequently,
\begin{align*}
\sigma^{AB}&=\sum_n p_n\sum_{i,j}\sqrt{\lambda_{n,i}\lambda_{n,j}}\ket{f^A}\!\bra{f^A}_A\otimes\ket{v_{n,i}}\!\bra{v_{n,j}}_B \\
&=f^A\otimes\left(\sum_n p_n\sum_{i,j}\sqrt{\lambda_{n,i}\lambda_{n,j}}\ket{v_{n,i}}\!\bra{v_{n,j}}_B\right) \\
&:=f^A\otimes\tau^B.
\end{align*}

Especially, if a bipartite quantum state \(\sigma^{AB}\) satisfies $\operatorname{Tr}_B\bigl(\sigma^{AB}\bigr)=\ket{f^A}\!\bra{f^A}$, then $\sigma^{AB}=f^A\otimes\tau^B$.

\begin{thm}\label{t4.1}
$\mathcal{Q}^{AB}$ is a well-defined  QSBT measure on the system $\mathcal{H}_A\otimes \mathcal{H}_B$
if $\mathcal{Q}$ is a QST measure on the system $\mathcal{H}_A$.
\end{thm}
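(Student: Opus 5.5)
We verify (T1)--(T3) for $\mathcal{Q}^{AB}$ in turn, using only that $\mathcal{Q}$ satisfies the corresponding QST axioms on $\mathcal{H}_A$. For QST the free set is the single state $f_1^A$, and free channels are those $\Lambda$ on $\mathcal{H}_A$ with $\Lambda(f_1^A)=f_1^A$. The partial trace $\Tr_B$ is linear and maps $\mathcal{D}(\mathcal{H}_A\otimes\mathcal{H}_B)$ onto $\mathcal{D}(\mathcal{H}_A)$, so $\mathcal{Q}^{AB}$ is a well-defined nonnegative function.

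\textbf{(T1).} Nonnegativity is inherited from $\mathcal{Q}$. If $\sigma^{AB}=f_1^A\otimes\rho_B\in\mathcal{T}_B(\mathcal{H}_A\otimes\mathcal{H}_B)$, then $\Tr_B\sigma^{AB}=f_1^A$, so $\mathcal{Q}^{AB}(\sigma^{AB})=0$. Conversely, suppose $\mathcal{Q}^{AB}(\sigma^{AB})=0$. Faithfulness of $\mathcal{Q}$ gives $\Tr_B\sigma^{AB}=f_1^A=\ket{f_1}\!\bra{f_1}$, which is a pure state. The fact established just before the theorem then yields $\sigma^{AB}=f_1^A\otimes\tau^B$, so $\sigma^{AB}$ is free. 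This is exactly where that fact is used.

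\textbf{(T3).} By linearity of the partial trace, $\Tr_B(\sum_ip_i\rho_i^{AB})=\sum_ip_i\Tr_B\rho_i^{AB}$. Convexity of $\mathcal{Q}$ then gives the required inequality directly.

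\textbf{(T2).} This is the main obstacle. Let $\Phi$ be a channel that preserves $\mathcal{T}_B$. We need $\mathcal{Q}(\Tr_B\rho^{AB})\ge\mathcal{Q}(\Tr_B\Phi(\rho^{AB}))$, and the difficulty is that $\Tr_B\Phi(\rho^{AB})$ need not be a function of $\Tr_B\rho^{AB}$ alone.

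The plan is to fix a state $\tau_B$ and define the channel on $\mathcal{H}_A$
\[
\Lambda(X)=\Tr_B\,\Phi(X\otimes\tau_B).
\]
This is a quantum channel, being a composition of an appending map, $\Phi$ and a partial trace. It is QST-free because $\Lambda(f_1^A)=\Tr_B\Phi(f_1^A\otimes\tau_B)=f_1^A$, since $\Phi(f_1^A\otimes\tau_B)\in\mathcal{T}_B$. Monotonicity of $\mathcal{Q}$ then gives
\[
\mathcal{Q}^{AB}(\Phi(\rho_A\otimes\tau_B))\le\mathcal{Q}(\rho_A)=\mathcal{Q}^{AB}(\rho_A\otimes\tau_B),
\]
which settles all product inputs.

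For a general $\rho^{AB}$ I would try to reduce to this case. The first attempt is to show that a $\mathcal{T}_B$-preserving $\Phi$ satisfies $\Tr_B\Phi(\rho^{AB})=\Tr_B\Phi(\rho_A\otimes\tau_B)$ for a suitable $\tau_B$, i.e.\ that $\Phi$ acts on the $A$-marginal in a way controlled by the marginal. If that fails, I would either argue within the paper's intended class of free operations (e.g.\ those of the form $\Lambda_A\otimes\Gamma_B$, or those composed with local operations on $B$, for which $\Tr_B\circ\Phi=\Lambda_A\circ\Tr_B$ and monotonicity is immediate), or accept that the theorem holds for that restricted class.
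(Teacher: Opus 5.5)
\textbf{The gap is in (T2), and it cannot be closed in the generality stated.} Your (T1) and (T3) match the paper's argument. Both you and the paper need faithfulness of $\mathcal{Q}$ as an extra hypothesis for the converse in (T1), since axiom i) only gives $\mathcal{Q}(f_1)=0$. Your product-input argument for (T2) via $\Lambda(X)=\Tr_B\Phi(X\otimes\tau_B)$ is correct. But (T2) for general $\rho^{AB}$ is left open, and for a general QST measure it is false.

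Take $\mathcal{H}_A=\mathcal{H}_B=\mathbb{C}^2$, $\ket{f_1^A}=\ket{+}$, and $\Phi(\rho)=U\rho U^\dagger$ with $U=\mathbb{I}_A\otimes\ketbra{0}{0}_B+X\otimes\ketbra{1}{1}_B$, i.e. a CNOT controlled by $B$, where $X$ is the Pauli matrix.
\begin{itemize}
\item Since $X\ket{+}=\ket{+}$, $U$ acts as the identity on $\ket{+}\otimes\mathcal{H}_B$. So $\Phi$ fixes every $f_1^A\otimes\rho_B$ and is admissible in (T2); it even satisfies the Kraus-wise condition of (T4).
\item For $\rho^{AB}=\frac12(\ketbra{00}{00}+\ketbra{11}{11})$ one has $\Tr_B\rho^{AB}=\mathbb{I}/2$, but $\Phi(\rho^{AB})=\ketbra{0}{0}\otimes\mathbb{I}/2$.
\item For the trace-distance measure of Table I, $T_{\tr}(\mathbb{I}/2)=1/2<1/\sqrt{2}=T_{\tr}(\ketbra{0}{0})$, so (T2) fails.
\end{itemize}
This also rules out your first route. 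For every $\tau_B$ one has $\Tr_B\Phi(\mathbb{I}/2\otimes\tau_B)=\mathbb{I}/2$, so no product input with the same marginal reproduces the output marginal. Because this input is only classically correlated, no convexity argument over separable decompositions can rescue it either.

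\textbf{How the paper handles this step.} The paper gets past it only by asserting that a $\mathcal{T}_B$-preserving channel must have product Kraus operators $K_\mu=A_\mu\otimes B_\mu$, and hence equals $\mathcal{C}_A\otimes\Psi_B$. The example above refutes this: every Kraus representation of a unitary channel consists of scalar multiples of $U$, and $U$ has operator Schmidt rank $2$. Even product Kraus operators would only make $\Phi$ separable, not a tensor product of two channels. Once the factorization is granted, the rest of the paper's computation is exactly $\Tr_B\circ\Phi=\mathcal{C}_A\circ\Tr_B$ with $\mathcal{C}_A(f_1^A)=f_1^A$, which is your fallback. So your restricted statement is the correct salvageable version of the theorem and is what the paper's argument actually proves. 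That statement is: free operations satisfying $\Tr_B\circ\Phi=\Lambda_A\circ\Tr_B$ for some $\Lambda_A$ fixing $f_1^A$.

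\textbf{Measures where full (T2) does hold.} Suppose $\mathcal{Q}$ is a decreasing function of $\bra{f_1^A}\rho_A\ket{f_1^A}$, for example $T_g$ or the rugosity $\mathfrak{R}$. For any $\mathcal{T}_B$-preserving $\Phi$, the effect $E=\Phi^\dagger(\ketbra{f_1^A}{f_1^A}\otimes\mathbb{I}_B)$ satisfies $0\le E\le\mathbb{I}$. Free preservation gives $\bra{f_1^A}\bra{\phi}E\ket{f_1^A}\ket{\phi}=1$ for every unit $\ket{\phi}$, which forces $E\geq\ketbra{f_1^A}{f_1^A}\otimes\mathbb{I}_B$. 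This yields (T2) for all $\mathcal{T}_B$-preserving channels, but only for those particular measures.
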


\begin{proof}
An arbitrary quantum-state texture measure $\mathcal{T}$ satisfies the following conditions \cite{parisio 1,wang2025,zhang2025,cui2025}: i) $\mathcal{T}(\rho) \geqslant 0$ and $\mathcal{T}(f_1) = 0$;
ii) $\mathcal{T}(\rho) \geqslant \mathcal{T}\left[\Phi(\rho)\right]$ for any
completely positive and trace-preserving (CPTP) maps $\Phi(\rho) = \sum\limits_n K_n \rho K_n^\dagger$, $\sum_n K_n^\dagger K_n = \mathbb{I}$, satisfying $\Phi(f_1) = f_1$; 
iii) $\mathcal{T}$ is convex, $\mathcal{T}\left(\sum\limits_i p_i \rho_i\right) \leqslant \sum\limits_i p_i \mathcal{T}(\rho_i)$ 
for any probability distribution $\{p_i\}$ ($\sum\limits_i p_i = 1$ and $p_i \geqslant 0$).

Obviously $\mathcal{Q}^{AB}(\rho^{AB})\geqslant 0$ for any $\rho^{AB}\in \mathcal{D}(\mathcal{H}_A\otimes \mathcal{H}_B)$.  For arbitrary block textureless state $\sigma^{AB}\in \mathcal{T}_B(\mathcal{H}_A\otimes\mathcal{H}_B)$, one has
$\mathcal{Q}^{AB}(\sigma^{AB})=\mathcal{Q}(f_1^A)=0$. Conversely, if a state $\rho^{AB}$ satisfies $\mathcal{Q}^{AB}(\sigma^{AB})=0$ and the considered QST measure $\mathcal{Q}$ is faithful, then one have $\Tr_B(\rho^{AB})=f^A$, which implies $\sigma^{AB}=f^A\otimes\tau^B$, meaning that $\sigma^{AB}\in\mathcal{T}_B(\mathcal{H}_A\otimes\mathcal{H}_B)$ is a block textureless state.

Let $\Psi: \mathcal{B}(\mathcal{H}_A\otimes\mathcal{H}_B) \to \mathcal{B}(\mathcal{H}_A\otimes\mathcal{H}_B)$
 be a quantum channel (completely positive and trace-preserving, CPTP). If $\Psi(\sigma^{AB})\in\mathcal{T}_B(\mathcal{H}_A\otimes\mathcal{H}_B)$, i.e., $\Psi(\sigma^{AB})=\Psi(f_1^A\otimes \rho^B)=f_1^A\otimes \rho^{\prime, B}$, we can define a channel  $\Psi_B(\rho^B)$ acting on the system $\mathcal{H}_B$, where $\Psi_B(\sigma^{AB})=\Tr_A(\Psi(\sigma^{AB}))=\rho^{\prime, B}$.
Since $\Psi$ is completely positive and trace preserving, then $\Psi_B$ is a well-defined quantum channel on $\mathcal{H}_B$.

Let $\{K_\mu\}$ be a Kraus representation of $\Psi$, so that
\[
\Psi(X) = \sum_\mu K_\mu X K_\mu^\dagger,
\qquad
\sum_\mu K_\mu^\dagger K_\mu = \mathbb{I}_{AB}.
\]

\begin{widetext}
\begin{table}[t]
    \centering
    \caption{Some QSBT measures induced by QST measures}
    \label{tab_01}
    \begin{tabular}{ll}
        \toprule
        QST measures & QSBT measures induced by QST measures \\
        \midrule
        (1) trace-distance measure $T_{\tr}(\rho)=\frac{1}{2}\norm{\rho-f_1}_1$ \cite{wang2025} & $T_{\tr}^{AB}(\rho^{AB}):=T_{\tr}(\Tr_B(\rho^{AB}))=\frac{1}{2}\norm{\Tr_B(\rho^{AB})-f_1^A}_1$ \\
        \addlinespace[1.5em]
        (2) geometric measure $T_g(\rho)=1-\bra{f_1}\rho\ket{f_1}$ \cite{wang2025} & $T_g^{AB}(\rho^{AB}):=T_g(\Tr_B(\rho^{AB}))=1-\bra{f_1^A}\Tr_B(\rho^{AB})\ket{f^A_1}$ \\
        \addlinespace[1.5em]
        (3) fidelity measure $T_F(\rho)=1-F(\rho,f_1)$ \cite{wang2025} & $T_F^{AB}(\rho{AB}):=T_F(\Tr_B(\rho^{AB}))=1-F(\Tr_B(\rho^{AB}),f^A_1)$ \\[1em]
        (4) Bures measure $T_B(\rho)=2\left(1-\sqrt{F(\rho,f_1)}\right)$ \cite{wang2025} & $T_B^{AB}(\rho^{AB}):=T_B(\Tr_B(\rho^{AB}))=2\left(1-\sqrt{F(\Tr_B(\rho^{AB}),f^A_1)}\right)$ \\
        \addlinespace[1.5em]
        \makecell[l]{(5) weight-based measure \cite{zhang2025} \\\quad$T_w(\rho)=\min\{s\geqslant 0\mid\rho=(1-s)f_1+s\sigma,\sigma\in\mathcal{D}(\mathcal{H})\}$} & $\begin{aligned}T_w^{AB}(\rho^{AB})&:=T_w(\Tr_B(\rho^{AB}))\\&=\min\{s\geqslant 0\mid\Tr_B(\rho^{AB})=(1-s)f^A_1+s\sigma^A,\sigma^A\in\mathcal{D}(\mathcal{H}_A)\}\end{aligned}$ \\
        \addlinespace[1.5em]
        \makecell[l]{(6) Tsallis relative entropy of QST \cite{zhang2025} \\ \quad$T_T(\rho)=\dfrac{1-\bra{f_1}\rho^\mu\ket{f_1}}{1-\mu},\quad\mu\in(0,1)$} & $T_T^{AB}(\rho^{AB}):=T_T(\Tr_B(\rho^{AB}))
        =\dfrac{1-\bra{f_1}\left(\Tr_B\rho^{AB}\right)^\mu\ket{f_1}}{1-\mu},\quad\mu\in(0,1)$ \\
        \addlinespace[1.5em]
        \makecell[l]{(7) QST measure based on Hellinger distance \cite{Muthuganesan2025}\\ \quad$T_H(\rho)=D(\rho,f_1)$, where $D(\rho,\sigma)=\Tr(\rho-\sigma)^2$ is\\ \quad the Hellinger distance between states $\rho,\sigma$.} & $T_H^{AB}(\rho^{AB}):=D(\Tr_B(\rho^{AB}),f^A_1)=\Tr(\Tr_B(\rho^{AB})-f^A_1)^2$\\
        \addlinespace[1.5em]
        \makecell[l]{(8) QST Measure via skew information \cite{Muthuganesan2025}\\ \quad$\begin{aligned}T_{\text{skew}}(\rho)&=I_{\alpha}(\rho,f_1)\\&=\bra{f_1}\rho\ket{f_1}-\bra{f_1}\rho^{\alpha}\ket{f_1}\bra{f_1}\rho^{1-\alpha}\ket{f_1},\end{aligned}$\\ \quad where $I_{\alpha}(\rho,K)=-\frac{1}{2}\Tr([\rho^{\alpha},K][\rho^{1-\alpha},K])$ is the skew\\ \quad information between quantum state $\rho$ and  observable $K$,\\ \quad $\alpha\in(0,1)$.} & $\begin{aligned}T_{\text{skew}}^{AB}(\rho^{AB})&:=T_{\text{skew}}(\Tr_B(\rho^{AB}))\\&=I_{\alpha}(\Tr_B(\rho^{AB}),f^A_1)\\&=\bra{f^A_1}\Tr_B\left(\rho^{AB}\right)\ket{f^A_1}-\bra{f^A_1}\left(\Tr_B\rho^{AB}\right)^{\alpha}\ket{f^A_1}\\&\quad\times\bra{f^A_1}\left(\Tr_B\rho^{AB}\right)^{1-\alpha}\ket{f^A_1},\quad\alpha\in(0,1)\end{aligned}$\\
         \addlinespace[1.5em]
         \makecell[l]{(9) JSD-based measure $T_J(\rho)=J(\rho,f_1)$ \cite{Muthuganesan2025}, where\\ \quad$\begin{aligned}J(\rho,\sigma)&=\frac{1}{2}S\left(\rho\middle\|\frac{\rho+\sigma}{2}\right)+\frac{1}{2}S\left(\sigma\middle\|\frac{\rho+\sigma}{2}\right)\\&=S\left(\frac{\rho+\sigma}{2}\right)-\frac{1}{2}S(\rho)-\frac{1}{2}S(\sigma)\end{aligned}$\\ \quad is the Jensen-Shannon divergence between two states\\ \quad$\rho,\sigma$ and $S(\rho)=-\Tr(\rho\log\rho)$.} & $T_J^{AB}(\rho^{AB}):=T_J(\Tr_B(\rho^{AB}))=J(\Tr_B(\rho^{AB}),f^A_1)$\\
        \bottomrule
    \end{tabular}
\end{table}
\end{widetext}

Substituting the given condition yields
\[
\sum_\mu K_\mu (f_1^A\otimes\rho_B) K_\mu^\dagger
= f_1^A \otimes \Psi_B(\rho_B),
\quad \forall\rho_B\in \mathcal{D}(\mathcal{H}_B).
\]
This identity holds for all $\rho_B$ only if each Kraus operator decouples across subsystems, i.e.,
\[
K_\mu = A_\mu \otimes B_\mu
\]
for operators $A_\mu$ on $\mathcal{H}_A$ and $B_\mu$ on $\mathcal{H}_B$.
Inserting this form back into the Kraus sum defines two channels:
\[
\mathcal{C}_A(Y) := \sum_\mu A_\mu Y A_\mu^\dagger,
\qquad
\Psi_B(Z) := \sum_\mu B_\mu Z B_\mu^\dagger.
\]
By linearity and the defining property of tensor-product maps, we can get 
\(
\Psi = \mathcal{C}_A \otimes \Psi_B.
\)
Here, the condition $\Psi(f_1^A\otimes\rho_B)=f_1^A\otimes\Psi_B(\rho_B)$
implies $\mathcal{C}_A(f_1^A)=f_1^A$.
Thus, we  have
\begin{align*}
\mathcal{Q}^{AB}(\Psi(\rho^{AB}))& =\mathcal{Q}^{AB}((\mathcal{C}_A\otimes \Psi_B)(\rho^{AB})) \\
&=\mathcal{Q}(\Tr_B[(\mathcal{C}_A\otimes \Psi_B)(\rho^{AB})]) \\
&=\mathcal{Q}((\mathcal{C}_A\otimes \Tr\Psi_B)(\rho^{AB})).
\end{align*}

Let $\rho^{AB}=\sum\limits_{i=1}^{d_A}\sum\limits_{j=1}^{d_A}
 \sum\limits_{k=1}^{d_B}\sum\limits_{l=1}^{d_B}a_{ij}^{kl}|e_i\rangle\langle j|\otimes |h_k\rangle \langle h_l|$ 
be arbitrary a bipartite state. Then
\begin{align*}
&\quad (\mathcal{C}_A\otimes \Tr\Psi_B)(\rho^{AB}) \\
&= (\mathcal{C}_A\otimes\Tr\Psi_B)\left(\sum_{i=1}^{d_A}\sum_{j=1}^{d_A}\sum_{k=1}^{d_B}\sum_{l=1}^{d_B}a_{ij}^{kl}|e_i \rangle\langle e_j|\otimes |h_k \rangle \langle h_l|\right) \\
&=\sum_{i=1}^{d_A}\sum_{j=1}^{d_A}\sum_{k=1}^{d_B}\sum_{l=1}^{d_B}a_{ij}^{kl}\mathcal{C}_A(|e_i \rangle\langle e_j|)\otimes \Tr\Psi_B(|h_k \rangle \langle h_l|) \\
&=\sum_{k=1}^{d_B}a_{ij}^{kk}\left(\sum_{i=1}^{d_A}\sum_{j=1}^{d_A}\mathcal{C}_A(|e_i \rangle\langle e_j|)\right) \\
&= \mathcal{C}_A\left(\sum_{k=1}^{d_B}\sum_{i=1}^{d_A}\sum_{j=1}^{d_A}a_{ij}^{kk}|e_i \rangle\langle e_j|\right) \\
&= \mathcal{C}_A(\Tr_B(\rho^{AB})).
\end{align*}
Therefore, we have 
\begin{align*}
\mathcal{Q}^{AB}(\Psi(\rho^{AB}))& =\mathcal{Q}( \mathcal{C}_A(\Tr_B(\rho^{AB}))) \\
&\leqslant \mathcal{Q}(\Tr_B(\rho^{AB})) \\
&=\mathcal{Q}^{AB}(\rho^{AB}),
\end{align*}

where the inequality  follows by ii) of the definition of QST measure.

Now consider arbitrary $\sum\limits_i p_i \rho^{AB}_i $ for any probability distribution $\{p_i\}$ ($\sum\limits_i p_i = 1$ and $p_i \geqslant 0$). We have
\begin{align*}
\mathcal{Q}^{AB}\left(\sum_i p_i \rho^{AB}_i\right) &=\mathcal{Q}\left(\sum_i p_i \Tr_B(\rho^{AB}_i)\right)\\
&\leqslant p_i \mathcal{Q}(\Tr_B(\rho^{AB}_i))\\
& = \sum_i p_i \mathcal{Q}^{AB}(\rho^{AB}_i),
\end{align*}
where the inequality  follows by iii) of the definition of QST measure.
Thus, the function $\mathcal{Q}^{AB}$ is a well-defined  QSBT measure on the system $\mathcal{H}_A\otimes \mathcal{H}_B$. 
\end{proof}

As an example, with respect to the state texture measure
$\mathfrak{R}(\rho)=-\ln\langle f_1 |\rho| f_1\rangle$ called state rugosity \cite{parisio 1},
we can define a state block texture measure,
\begin{equation}
\mathfrak{R}^{AB}(\rho^{AB})=\mathfrak{R}(\Tr_B(\rho^{AB}))=-\ln\langle f_1^A |\Tr_B(\rho^{AB})| f_1^A\rangle
\end{equation}
for arbitrary bipartite state $\rho^{AB}$. Theorem \ref{t4.1} shows that $\mathfrak{R}^{AB}$ is a well-defined QSBT measure.

Similarly, all the QST measures from Refs. \cite{wang2025, zhang2025,Muthuganesan2025} can be leveraged to define the corresponding QSBT measures. We list all these QST measures and their corresponding QSBT measures in Table \ref{tab_01}.

According to \cite[Theorem 3.]{wang2025}, the geometric measure $T^{AB}_g$ is an upper bound of the square of  the measure based on  trace distance $(T^{AB}_{\tr})^2$.

\begin{rem}
According to the definition of QSBT measure in Eq. (\ref{e7}), we have $\mathcal{Q}^{AB}(\rho^{AB})=\mathcal{Q}^{AB}(\sigma^{AB}) $ if  $\Tr_B(\rho^{AB})=\Tr_B(\sigma^{AB})$.
\end{rem}

\begin{eg} 
Let $\rho^1_p=p|\phi^+\rangle\langle \phi^+|+(1-p)|\psi^-\rangle\langle\psi^-|$
and $\rho^2_p=p|01\rangle\langle 01|+(1-p)|10\rangle\langle10|\)
with \(p\in (0,\frac{1}{2})\cup (\frac{1}{2},1)$ be two state families on $\C^2\otimes \C^2$, where $|\phi^+\rangle=\frac{1}{\sqrt{2}}(|00\rangle+|11\rangle)$ and $|\psi^-\rangle=\frac{1}{\sqrt{2}}(|01\rangle-|10\rangle)$.
Then $\Tr_B(\rho^1_p)=\frac{\mathbb{I}}{2}$ and $\Tr_B(\rho^2_p)=p|0\rangle\langle 0|+(1-p)|1\rangle\langle 1|$.

We have \(\mathfrak{R}^{AB}(\rho^1_p) = \mathfrak{R}^{AB}(\rho^2_p) =\ln2\). In other words, the block state rugosity cannot distinguish between these two state families. However,  by the trace distance, we have
\begin{equation*}
T^{AB}_{\tr}(\rho^1_p) =\frac{1}{2},\quad
T^{AB}_{\tr}(\rho^2_p) = \sqrt{p^2-p+\frac{1}{2}}.
\end{equation*}
When $p \in (0,\frac{1}{2})\cup (\frac{1}{2},1) $, $T^{AB}_{\tr}(\rho^1_p)<T^{AB}_{\tr}(\rho^2_p)\). Fig.\ref{Fig_01} shows that the trace-distance measure distinguishes all \(\rho^1_p\) and \(\rho^2_p\) for \(p \in (0,\frac{1}{2})\cup (\frac{1}{2},1) \). 

\begin{figure}[htbp]
\begin{tikzpicture}
\begin{axis}[
    axis lines=left,
    xlabel={$p$},
    ylabel={},
    xmin=0, xmax=1.0,
    ymin=0, ymax=0.9,
    xtick={0,0.2,0.4,0.5,0.6,0.8,1.0},
    ytick={0.25,0.5,0.707,0.75,0.9},
    grid=none,
    legend pos=north east,
    legend style={draw=orange, fill=white, font=\small},
]

\addplot[
    yellow,
    solid,
    mark=o,
    mark options={fill=white, draw=black},
    samples=20,
] coordinates {
    (0, 0.25)
    (0.1, 0.25)
    (0.2, 0.25)
    (0.3, 0.25)
    (0.4, 0.25)
    (0.5, 0.25)
    (0.6, 0.25)
    (0.7, 0.25)
    (0.8,0.25)
    (0.9, 0.25)
    (1.0, 0.25)
};
\addlegendentry{$[T^{AB}_{\tr}(\rho_p^1)]^2$}

\addplot[
    blue,
    dashed,
    mark=o,
    mark options={fill=white, draw=black},
    samples=20,
] coordinates {
    (0, 0.707)
    (0.1, 0.58)
    (0.2, 0.465)
    (0.3, 0.37)
    (0.4, 0.29)
    (0.5, 0.25)
    (0.6, 0.29)
    (0.7, 0.37)
    (0.8, 0.465)
    (0.9, 0.58)
    (1.0, 0.707)
};
\addlegendentry{$[T^{AB}_{\tr}(\rho_p^2)]^2$}

\addplot[
    black,
    only marks,
    mark=o,
    mark size=2pt,
] coordinates {(0.5, 0.25)};
\draw[dashed] (axis cs:0.5,0) -- (axis cs:0.5,0.25);
\end{axis}
 \end{tikzpicture}
\caption{$[T^{AB}_{\tr}(\rho_p^2)]^2$ and $[T^{AB}_{\tr}(\rho_p^1)]^2$ versus $p$.}
\label{Fig_01}
\end{figure}
\end{eg}

\subsection{Quantum-state block texture measures induced by a class of concave functions}
In this section, we present quantifiers of QSBT by the convex roof construction. Let 
$f:[0,1]\rightarrow[0,+\infty)$ be a function satisfying the following conditions:
\begin{enumerate}
    \item $f(1)=0$;
    \item $f$ is monotonically decreasing;
    \item $f$ is concave: $f(\lambda x+ (1-\lambda)y)\geqslant \lambda f(x)+(1-\lambda)f(y)$ for all $\lambda\in[0,1]$ and all $x,y\in[0,1]$.
\end{enumerate}

For any pure state $\ket{\psi^{AB}}$, we define
\begin{equation}\label{e09}
    \begin{split}
    \mathcal{P}_f(\ket{\psi^{AB}}) &= f\left(\Tr[\kb{f^A_1}{f^A_1}\Tr_B(\kb{\psi^{AB}}{\psi^{AB}})]\right) \\
    &= f\left(\bra{f^A_1}\Tr_B[\kb{\psi^{AB}}{\psi^{AB}}]\ket{f^A_1}\right),
    \end{split}
\end{equation}
and for any mixed state $\rho^{AB}$, we define
\begin{equation}
    \mathcal{P}_f(\rho^{AB}) = \min\sum_i p_i \mathcal{P}_f(\ket{\psi^{AB}_i}),
\end{equation}
where the minimization is taken over all ensembles $\{p_i, \ket{\psi^{AB}_i}\}$ such that $\rho^{AB}=\sum\limits_i p_i\kb{\psi^{AB}_i}{\psi^{AB}_i}$.

\begin{thm}\label{t3.4}
For any function $f$ satisfying the above conditions, the quantity $\mathcal{P}_f$ is a valid QSBT measure.
\end{thm}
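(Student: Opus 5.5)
Throughout, write $g(\rho^{AB}):=\bra{f_1^A}\Tr_B(\rho^{AB})\ket{f_1^A}$, which is a linear functional of $\rho^{AB}$ with values in $[0,1]$. On pure states $\mathcal{P}_f(\ket{\psi^{AB}})=f(g(\psi^{AB}))$. The plan is to verify (T1)--(T3) directly, and then obtain monotonicity from a strong-monotonicity argument in the style used for convex-roof coherence and imaginarity measures.

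\emph{Convexity (T3).} This is automatic for any convex-roof construction. Concatenating optimal ensembles of the $\rho_i$, weighted by $p_i$, gives an admissible ensemble for $\sum_i p_i\rho_i$. Hence the minimum for the mixture is at most $\sum_i p_i\mathcal{P}_f(\rho_i)$.

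\emph{Nonnegativity and faithfulness (T1).} Nonnegativity is immediate since $f\geqslant 0$.
\begin{itemize}
\item If $\rho^{AB}=f_1^A\otimes\rho_B$, decompose $\rho_B=\sum_k q_k\kb{b_k}{b_k}$. Then $\{q_k,\ket{f_1^A}\ket{b_k}\}$ is an ensemble of pure states each having $g=1$, so $\mathcal{P}_f=f(1)=0$.
\item Conversely, for faithfulness we need $f(x)>0$ for $x<1$. This follows from $f(1)=0$, $f$ decreasing and concave, \emph{provided} $f$ is not identically zero. A concave, decreasing $f$ that vanishes at some $x_0<1$ must vanish on $[x_0,1]$, and concavity on $[0,1]$ then forces $f\equiv0$. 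So I will note that the nontrivial case is assumed.
\item Given this, $\mathcal{P}_f(\rho)=0$ means some optimal ensemble (the minimum is attained by compactness and Carathéodory) consists of pure states with $g(\psi_i)=1$. That gives $\Tr_B\kb{\psi_i}{\psi_i}=\kb{f_1^A}{f_1^A}$, since this is the only state with unit overlap with the pure state $f_1^A$.
\item The fact proved before Theorem \ref{t4.1} then yields $\psi_i=f_1^A\otimes\tau_i$. Summing gives $\rho\in\mathcal{T}_B$.
\end{itemize}

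\emph{Monotonicity (T2, and T4).} This is the main obstacle. I will reuse the structural step from the proof of Theorem \ref{t4.1}: a free channel has Kraus operators $K_\mu=A_\mu\otimes B_\mu$ with $\mathcal{C}_A(f_1^A)=f_1^A$. The steps are:
\begin{enumerate}
\item Take an optimal ensemble $\{p_i,\psi_i\}$ of $\rho$. Then $\{p_i\Tr(K_\mu\psi_iK_\mu^\dagger),\,K_\mu\ket{\psi_i}/\|\cdot\|\}$ is an ensemble for $\Psi(\rho)$, so
\[
\mathcal{P}_f(\Psi(\rho))\leqslant\sum_i p_i\sum_\mu q_{i\mu}\,f\bigl(g(\psi_{i\mu})\bigr).
\]
\item By concavity of $f$, for each $i$ this is bounded by $f\bigl(\sum_\mu q_{i\mu}g(\psi_{i\mu})\bigr)=f\bigl(g(\Psi(\psi_i))\bigr)$, using linearity of $g$.
\item Since $\Tr_B\Psi(\psi)=\mathcal{C}_A(\Tr_B\psi)$, as computed in the proof of Theorem \ref{t4.1}, it remains to show
\[
\bra{f_1^A}\mathcal{C}_A(\sigma)\ket{f_1^A}\geqslant\bra{f_1^A}\sigma\ket{f_1^A}
\]
for all $\sigma$. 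Once this holds, monotone decrease of $f$ finishes the argument.
\end{enumerate}

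For step 3 I would pass to the dual map. $\mathcal{C}_A(f_1)=f_1$ with $f_1$ pure forces each $A_\mu\ket{f_1}\propto\ket{f_1}$, and I intend to derive $\mathcal{C}_A^\dagger(\kb{f_1}{f_1})\geqslant\kb{f_1}{f_1}$. This inequality is the delicate point. It holds when $\mathcal{C}_A$ is unital or when the $A_\mu$ also preserve the orthogonal complement of $\ket{f_1}$, but it may fail in general. If it fails, I would fall back on the same device used for the geometric QST measure in \cite{wang2025}, namely restricting to the class of free operations for which the convex-roof argument closes.

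The same computation, stopped before averaging over $\mu$, gives the strong monotonicity (T4).
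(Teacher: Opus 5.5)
Your overall structure matches the paper's Appendix A. Convexity comes from concatenating optimal ensembles. Monotonicity comes from pushing an optimal ensemble through the Kraus operators, then using concavity and the monotone decrease of $f$, which reduces everything to the single inequality $g(\Psi(\rho))\geqslant g(\rho)$ for free $\Psi$. Your (T1) is more careful than the paper's, which never excludes $f\equiv0$.

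The genuine gap is that this inequality, on which both (T2) and (T4) rest, is never established. You leave it open and suggest it ``may fail in general''. You then propose to fall back on a restricted class of operations, which would prove something weaker than (T2) as defined. The paper simply imports the inequality from Parisio's work.

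Your intended route to it is also not sound. The reduction $\Tr_B\Psi(\psi)=\mathcal{C}_A(\Tr_B\psi)$ leans on the claim in the proof of Theorem \ref{t4.1} that free channels have product Kraus operators, and that claim is false. Take $m=n=2$ and the controlled-NOT $\mathbb{I}_A\otimes\kb{0}{0}+\tau_x\otimes\kb{1}{1}$, with control on $B$ and target on $A$.
\begin{itemize}
\item Since $\tau_x\ket{f_1^A}=\ket{f_1^A}$, it fixes every $\ket{f_1^A}\otimes\ket{b}$, so it preserves $\mathcal{T}_B$.
\item It maps $\ket{00}$ and $\ket{01}$, which have the same $A$-marginal, to states with $A$-marginals $\kb{0}{0}$ and $\kb{1}{1}$.
\end{itemize}
Hence no channel $\mathcal{C}_A$ with $\Tr_B\circ\Psi=\mathcal{C}_A\circ\Tr_B$ exists.

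The missing idea is a short effect argument on the full bipartite system, which needs no product structure. Put $\Pi:=\kb{f_1^A}{f_1^A}\otimes\mathbb{I}_B$, so that $g(\rho)=\Tr(\Pi\rho)$, and let $E:=\Psi^\dagger(\Pi)$. Since $\Psi^\dagger$ is positive and unital, $0\leqslant E\leqslant\mathbb{I}$. For every unit $\ket{b}\in\mathcal{H}_B$, set $\ket{v_b}:=\ket{f_1^A}\otimes\ket{b}$. Then
\[
\bra{v_b}(\mathbb{I}-E)\ket{v_b}=1-\Tr\bigl[\Pi\,\Psi(f_1^A\otimes\kb{b}{b})\bigr]=0,
\]
because $\Psi(f_1^A\otimes\kb{b}{b})=f_1^A\otimes\tau'$ for some state $\tau'$. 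Since $\mathbb{I}-E\geqslant0$, this forces $(\mathbb{I}-E)\ket{v_b}=0$ for all $b$, i.e. $E\Pi=\Pi=\Pi E$. Hence $E=\Pi+(\mathbb{I}-\Pi)E(\mathbb{I}-\Pi)\geqslant\Pi$, and $g(\Psi(\rho))=\Tr(E\rho)\geqslant g(\rho)$ for every state.

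The same reasoning closes your single-system version: $\bra{f_1}\mathcal{C}_A^\dagger(P)\ket{f_1}=1$ together with $\mathcal{C}_A^\dagger(P)\leqslant\mathbb{I}$ already gives $\mathcal{C}_A^\dagger(P)\geqslant P$. So your worry that the inequality can fail is unfounded.

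For (T4), Kraus operators with $K_i\mathcal{T}_BK_i^\dagger\subseteq\mathcal{T}_B$ (up to normalization) define a channel that preserves $\mathcal{T}_B$, so the inequality applies there too. With it in place, your steps 1--3 and your strong-monotonicity remark go through for all free operations.
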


The proof of this theorem is provided in the appendix A.

\begin{rem}
According to (\ref{e09}), for any pure states $\rho^{AB}$ and $\sigma^{AB}$ we have $\mathcal{P}_f(\rho^{AB})=\mathcal{P}_f(\sigma^{AB})$ if $\Tr_B(\rho^{AB})=\Tr_B(\sigma^{AB})$. 
\end{rem}

In fact, the QSBT measure $T^{AB}_g(\rho^{AB})=T_g(\Tr_B(\rho^{AB}))$ induced by the geometric measure of QST in the subsection A can be also constructed through a concave function $f(x)=1-x$ in the domain $[0,1]$ by Theorem III.4. Similarly, the QSBT measure $\mathfrak{R}^{AB}(\rho^{AB})=\mathfrak{R}(\Tr_B(\rho^{AB}))$ can be also obtained by using the concave function $f(x)=-\ln x$ in the domain $[0,1]$.

\begin{eg}
The function $f(x)=\cos(\frac{\pi}{2}x)$ is monotonically decreasing and concave on interval $[0,1]$ with $f(1)=\cos\frac{\pi}{2}$. Then we can define a QSBT measure $\mathcal{P}_{\cos}$ for any pure state $\ket{\psi^{AB}}$,
\begin{equation}\label{e011}
    \begin{split}
    \mathcal{P}_{\cos}(\ket{\psi^{AB}}) &= f\left(\Tr[\kb{f^A_1}{f^A_1}\Tr_B(\kb{\psi^{AB}}{\psi^{AB}})]\right) \\
    &= \cos\left(\bra{f^A_1}\Tr_B[\kb{\psi^{AB}}{\psi^{AB}}]\ket{f^A_1}\right),
    \end{split}
\end{equation}
and for any mixed state $\rho^{AB}$,
\begin{equation}
    \mathcal{P}_{\cos}(\rho^{AB}) = \min\sum_i p_i \mathcal{P}_{\cos}(\ket{\psi^{AB}_i}),
\end{equation}
where the minimization is taken over all ensembles $\{p_i, \ket{\psi^{AB}_i}\}$ such that $\rho^{AB}=\sum_i p_i\kb{\psi^{AB}_i}{\psi^{AB}_i}$.

Let us consider the following two-qubit pure states:
$$
\rho_1=\left(\begin{array}{llll}
\frac{1}{9}&\frac{2}{9}&0&\frac{2}{9}\\
\frac{2}{9}&\frac{4}{9}&0&\frac{4}{9}\\
0&0&0&0\\
\frac{2}{9}&\frac{4}{9}&0&\frac{4}{9}
\end{array}
\right),\quad
\rho_2=\left(\begin{array}{llll}
\frac{1}{9}&0&\frac{2}{9}&\frac{2}{9}\\
0&0&0&0\\
\frac{2}{9}&0&\frac{4}{9}&\frac{4}{9}\\
\frac{2}{9}&0&\frac{4}{9}&\frac{4}{9}
\end{array}
\right).
$$
We have
$$
\rho_1^A=\Tr_B(\rho_1)=\left(\begin{array}{lll}
\frac{5}{9}&\frac{4}{9}\\
\frac{4}{9}&\frac{4}{9}
\end{array}
\right),\quad\rho_2^A=\Tr_B(\rho_2)=\left(\begin{array}{lll}
\frac{1}{9}&\frac{2}{9}\\
\frac{2}{9}&\frac{8}{9}
\end{array}
\right).
$$
Thus $\mathcal{P}_{\cos}(\rho_1)=\cos(\frac{\pi}{2}\<f_1^A|\rho_1^A|f_1^A\>)=\cos(\frac{17\pi}{36})$ and $\mathcal{P}_{\cos}(\rho_2)=\cos(\frac{\pi}{2}\<f_1^A|\rho_2^A|f_1^A\>)=\cos(\frac{13\pi}{36}).$
Since $\mathcal{P}_{\cos}(\rho_1)<\mathcal{P}_{\cos}(\rho_2)$, then we can conclude that as two block states, 
 the texture of $\rho_2$ is larger than  that of $\rho_1$, and these two states can be distinguished in terms of block texture.
\end{eg}

\subsection{Quantum-state block measures from other approaches}

We first present a geometric measure of QSBT.
For a pure state $\ket{\psi^{AB}}$, we  define
    \begin{equation}\label{e8}
        M_g(\ket{\psi^{AB}}) = 1-\max_{\ket{\phi^{AB}}\in\mathcal{T}_B(\mathcal{H}_A\otimes\mathcal{H}_B)}\abs{\bk{\phi^{AB}}{\psi^{AB}}}^2.
    \end{equation}
     For a mixed state $\rho^{AB}$, we define $M_g$ as the minimal average block texture by convex roof extension,
    \begin{equation}\label{e9}
    M_g(\rho^{AB}) = \min\limits_{\{p_i,|\psi^{AB}_i\}}\sum_ip_iM_g(\ket{\psi^{AB}_i}),
    \end{equation}
    where the minimization is taken over all ensembles $\{p_i, \ket{\psi^{AB}_i}\}$ such that $\rho^{AB} = \sum\limits_ip_i\kb{\psi^{AB}_i}{\psi^{AB}_i}$.

\begin{prop}\label{p3.7}
$M_g$ is a well-defined QSBT measure.
\end{prop}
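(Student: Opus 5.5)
The plan is to reduce Proposition \ref{p3.7} to Theorem \ref{t3.4} by showing that, on pure states, $M_g$ coincides with $\mathcal{P}_f$ for the concave function $f(x)=1-x$. Since both quantities are then extended to mixed states by the same convex roof, they would agree on all of $\mathcal{D}(\mathcal{H}_A\otimes\mathcal{H}_B)$. The function $f(x)=1-x$ satisfies $f(1)=0$, is monotonically decreasing, and is concave (indeed linear) on $[0,1]$. Hence Theorem \ref{t3.4} would immediately give (T1)--(T3), and (T4) to whatever extent that theorem provides it.

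The first step is to identify the pure states in $\mathcal{T}_B(\mathcal{H}_A\otimes\mathcal{H}_B)$. The optimization in (\ref{e8}) runs over pure $\ket{\phi^{AB}}$ with $\kb{\phi^{AB}}{\phi^{AB}}=f_1^A\otimes\rho_B$. A tensor product has rank one only if both factors do, so $\rho_B$ must be pure. Therefore $\ket{\phi^{AB}}=\ket{f_1^A}\otimes\ket{\varphi^B}$ up to phase, with $\ket{\varphi^B}$ an arbitrary unit vector in $\mathcal{H}_B$.

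The second step computes the maximum. Define the (unnormalized) vector $\ket{\chi^B}:=(\bra{f_1^A}\otimes\mathbb{I}_B)\ket{\psi^{AB}}\in\mathcal{H}_B$. Then
\begin{equation*}
\abs{\bk{\phi^{AB}}{\psi^{AB}}}^2=\abs{\bk{\varphi^B}{\chi^B}}^2 .
\end{equation*}
By Cauchy--Schwarz this is at most $\bk{\chi^B}{\chi^B}$, with equality when $\ket{\varphi^B}\propto\ket{\chi^B}$. If $\ket{\chi^B}=0$, any choice of $\ket{\varphi^B}$ gives the value $0$, which is still equal to $\bk{\chi^B}{\chi^B}$. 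Moreover,
\begin{equation*}
\bk{\chi^B}{\chi^B}=\bra{f_1^A}\Tr_B(\kb{\psi^{AB}}{\psi^{AB}})\ket{f_1^A}.
\end{equation*}
Hence
\begin{equation*}
M_g(\ket{\psi^{AB}})=1-\bra{f_1^A}\Tr_B(\kb{\psi^{AB}}{\psi^{AB}})\ket{f_1^A}=\mathcal{P}_{1-x}(\ket{\psi^{AB}}),
\end{equation*}
and by definition (\ref{e9}) the mixed-state extensions agree as well.

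The third step is to invoke Theorem \ref{t3.4}. As a consistency check, because $1-x$ is linear, every decomposition yields the same average. Thus $M_g(\rho^{AB})=1-\bra{f_1^A}\Tr_B\rho^{AB}\ket{f_1^A}=T_g^{AB}(\rho^{AB})$, which also follows from Theorem \ref{t4.1}. Faithfulness can be checked directly from this formula. $M_g(\rho^{AB})=0$ forces $\bra{f_1^A}\Tr_B\rho^{AB}\ket{f_1^A}=1$, so $\Tr_B\rho^{AB}=f_1^A$, and the fact proved before Theorem \ref{t4.1} then gives $\rho^{AB}=f_1^A\otimes\tau^B\in\mathcal{T}_B$. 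Conversely, every free state gives the value $0$.

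The main obstacle is the first step: one must argue carefully that pure free states are exactly product vectors $\ket{f_1^A}\otimes\ket{\varphi^B}$. Without this, the maximization in (\ref{e8}) is not tractable. The rank argument above should settle it. After that, the remaining work is routine.
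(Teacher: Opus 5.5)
Your argument is correct and takes a genuinely different route from the paper.

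The paper works directly from the definition. It rewrites $M_g$ on pure states as $1-\max F$ and obtains pure-state monotonicity from the data-processing inequality for the fidelity, comparing with $\mathcal{E}(\overline{\gamma}^{AB})$. It then carries monotonicity and convexity through the convex roof by splicing optimal decompositions.

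You instead use the structure of $\mathcal{T}_B$: its pure elements are exactly $\ket{f_1^A}\otimes\ket{\varphi^B}$, so the maximization can be done in closed form. Linearity then collapses the convex roof, so $M_g=\mathcal{P}_{1-x}=T_g^{AB}$ on all states, and Theorem~\ref{t3.4} applies.

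Your route yields more than the proposition:
\begin{itemize}
\item It gives the explicit formula $M_g(\rho^{AB})=1-\bra{f_1^A}\Tr_B\rho^{AB}\ket{f_1^A}$. This makes the paper's later remark that $M_g$ is hard to estimate unnecessary, and Theorem~\ref{t5} becomes $T_g^{AB}\geqslant M_{\mathrm{tr}}^2$ with a computable left-hand side.
\item It avoids a weak step in the paper's proof. There, $M_g(\mathcal{E}(\ket{\psi^{AB}}))$ is treated as $1-\max F$ over pure free states, although $\mathcal{E}(\ket{\psi^{AB}})$ is generally mixed (where $M_g$ is a convex roof). Also, $\mathcal{E}(\overline{\gamma}^{AB})$ need not be pure.
\end{itemize}
The paper's template, by contrast, is generic and does not need a closed form.

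Your proof still relies on the key inequality in the proof of Theorem~\ref{t3.4}, which the paper only cites from the non-block setting. Linearity lets you avoid this and check monotonicity directly. Let $P=\kb{f_1^A}{f_1^A}\otimes\mathbb{I}_B$; the free states are exactly those with $\Tr(P\sigma)=1$. For a free channel $\Psi$, set $X=\Psi^\dagger(P)$. Then $0\leqslant X\leqslant\mathbb{I}$, and $\bra{s}X\ket{s}=1$ for every unit $\ket{s}$ in the range of $P$. Hence $XP=P$, so $X=P+(\mathbb{I}-P)X(\mathbb{I}-P)\geqslant P$, and therefore $\Tr(P\Psi(\rho^{AB}))\geqslant\Tr(P\rho^{AB})$. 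Summed over Kraus operators, the same computation even gives (T4).
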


The proof is provided in the appendix B.

\begin{prop}\label{p3.8}
The following trace distance of QSBT is a bona fide measure,
    \begin{equation}\label{e15}
        M_{\mathrm{tr}}(\rho^{AB}) =\min_{\sigma^{AB}\in\mathcal{T}_B(\mathcal{H}_{A B})}\D(\rho^{AB},\sigma^{AB}),
    \end{equation}
    where $\D(\rho^{AB},\sigma^{AB})=\frac{1}{2}\norm{\rho^{AB}-\sigma^{AB}}_1$ and $\norm{X}_1 = \Tr\sqrt{XX^\dagger}$ is the trace norm of matrix $X$.
\end{prop}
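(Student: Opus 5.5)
We verify conditions (T1)--(T3) for $M_{\mathrm{tr}}$ directly from the three properties of the trace distance $\D$: it is nonnegative and faithful, contractive under quantum channels, and jointly convex. We also use that $\mathcal{T}_B(\mathcal{H}_A\otimes\mathcal{H}_B)$ is convex (shown above) and compact, being the image of the compact set $\mathcal{D}(\mathcal{H}_B)$ under the continuous map $\rho_B\mapsto f_1^A\otimes\rho_B$. Compactness, together with continuity of $\D$, ensures that the minimum in (\ref{e15}) is attained.

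\textbf{Nonnegativity (T1).} $M_{\mathrm{tr}}\geqslant 0$ is immediate. If $\rho^{AB}\in\mathcal{T}_B$, take $\sigma^{AB}=\rho^{AB}$ to get $M_{\mathrm{tr}}(\rho^{AB})=0$. Conversely, if $M_{\mathrm{tr}}(\rho^{AB})=0$, the minimum is attained at some $\sigma^{AB}\in\mathcal{T}_B$ with $\norm{\rho^{AB}-\sigma^{AB}}_1=0$. Hence $\rho^{AB}=\sigma^{AB}\in\mathcal{T}_B$.

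\textbf{Monotonicity (T2).} Let $\Phi$ be any channel with $\Phi(\mathcal{T}_B)\subseteq\mathcal{T}_B$, and let $\sigma_*^{AB}$ be an optimal free state for $\rho^{AB}$. Then $\Phi(\sigma_*^{AB})\in\mathcal{T}_B$. By contractivity of the trace norm under CPTP maps,
\[
M_{\mathrm{tr}}(\Phi(\rho^{AB}))\leqslant \D(\Phi(\rho^{AB}),\Phi(\sigma_*^{AB}))\leqslant \D(\rho^{AB},\sigma_*^{AB})=M_{\mathrm{tr}}(\rho^{AB}).
\]
This argument does not need the (questionable) product-Kraus decomposition used in the proof of Theorem \ref{t4.1}. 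It uses only that $\Phi$ maps free states to free states, which is exactly the hypothesis in (T2).

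\textbf{Convexity (T3).} For each $i$, choose an optimal $\sigma_i^{AB}=f_1^A\otimes\tau_i^B$. Then $\sum_ip_i\sigma_i^{AB}=f_1^A\otimes\sum_ip_i\tau_i^B$ lies in $\mathcal{T}_B$. By the triangle inequality for the trace norm,
\[
M_{\mathrm{tr}}\Big(\sum_ip_i\rho_i^{AB}\Big)\leqslant \frac12\Big\|\sum_ip_i(\rho_i^{AB}-\sigma_i^{AB})\Big\|_1\leqslant\sum_ip_iM_{\mathrm{tr}}(\rho_i^{AB}).
\]

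\textbf{Main obstacle.} The only delicate point is the ``only if'' direction of faithfulness, which relies on the minimum being attained. Compactness of $\mathcal{T}_B$ settles this.

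We do not plan to claim strong monotonicity (T4). It is known to fail for the trace-distance coherence measure, so by analogy we expect it may fail here as well. Accordingly, ``bona fide'' is established in the sense of (T1)--(T3).
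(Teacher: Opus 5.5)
Your proof is correct and takes essentially the same route as the paper's Appendix C. Monotonicity uses $\Phi(\sigma^{AB}_*)\in\mathcal{T}_B$ together with contractivity of the trace distance, and convexity uses the free state $\sum_i p_i\sigma_i^{AB}$ together with the triangle inequality; your compactness argument for attainment of the minimum additionally makes explicit the ``only if'' half of faithfulness, which the paper simply calls evident.
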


The proof is provided in the appendix C.

To address this question whether $M_\text{tr}$ is the same as the trace-distance measure of QSBT $\mathcal{T}_\text{tr}^{AB}$ induced by the trace-distance measure of QST $\mathcal{T}_\text{tr}$, we present a concrete example below to show explicitly that these two measures are not the same.

\begin{eg}
Consider the quantum state
\[\rho^{AB}=\frac{1}{4}
\begin{pmatrix}
1 & 0 & 0 & \frac{1}{2}-\frac{1}{2}\mathrm{i} \\
0 & 1 & \frac{1}{2}-\frac{1}{2}\mathrm{i} & 0 \\
0 & \frac{1}{2}+\frac{1}{2}\mathrm{i} & 1 & 0 \\
\frac{1}{2}+\frac{1}{2}\mathrm{i} & 0 & 0 & 1
\end{pmatrix}.
\]

We have
\begin{align*}
\mathcal{T}_\mathrm{tr}^{AB}(\rho^{AB})&=\mathcal{T}_\mathrm{tr}(\Tr_B(\rho^{AB}))=\frac{1}{2}\norm{\Tr_B(\rho^{AB})-f^A}_1 \\
&=\frac{1}{2}\norm{\frac{\mathbb{I}}{2}-\frac{1}{2}\begin{pmatrix}1 & 1 \\1 & 1\end{pmatrix}}_1=\frac{1}{2}\norm{\begin{pmatrix}0 & -\frac{1}{2} \\-\frac{1}{2} & 0\end{pmatrix}}_1
=\frac{1}{2},
\end{align*}
and
\[\begin{array}{lllll}
M_\mathrm{tr}(\rho^{AB})&=\min_{\sigma^B\in\mathcal{D}(\mathbb{C}^2)}\mathsf{D}(\rho^{AB},f^A\otimes\sigma^B)\\
&=\min_{\sigma^B\in\mathcal{D}(\mathbb{C}^2)}\frac{1}{2}\norm{\rho^{AB}-f^A\otimes\sigma^B}_\mathrm{tr},
\end{array}
\]
where $f^A=\kb{f^A}{f^A}, \ket{f^A}=\frac{1}{\sqrt{2}}(\ket{0}+\ket{1})$. In the computational basis $\{\ket{0},\ket{1}\}$ of a single-qubit system, any single-qubit state $\sigma^B$ can be represented as:
\begin{equation*}
\sigma^B=\frac{1}{2}(\mathbb{I}_2+\textbf{r}\cdot\boldsymbol{\tau})=\frac{1}{2}\begin{pmatrix}1+z & x-\mathrm{i}y \\ x+\mathrm{i}y & 1-z\end{pmatrix},
\end{equation*}
where $\textbf{r}=(x,y,z)$ is a unit real vector and $\boldsymbol{\tau}=(\tau_x,\tau_y,\tau_z)$ is the vector given by the standard Pauli matrices $\tau_x,\tau_y,\tau_z$.

By direct calculation, we have

\begin{widetext}
$$\rho^{AB}-f^A\otimes\sigma^B=\frac{1}{4}
\begin{pmatrix}
-z & -x+\mathrm{i}y & -z-1 & -x+\frac{1}{2}+\mathrm{i}(y-\frac{1}{2}) \\
-x-\mathrm{i}y & z & -x+\frac{1}{2}-\mathrm{i}(y+\frac{1}{2}) & z-1 \\
-z-1 & -x+\frac{1}{2}+\mathrm{i}(y+\frac{1}{2}) & -z & -x+\mathrm{i}y \\
-x+\frac{1}{2}-\mathrm{i}(y-\frac{1}{2}) & z-1 & -x-\mathrm{i}y & z
\end{pmatrix}.
$$
\end{widetext}
The trace distance $\mathsf{D}(\rho^{AB},f^A\otimes\sigma^B)$ between $\rho^{AB}$ and $f^A\otimes\sigma^B$ is the sum of the absolute values of all the eigenvalues of the matrix $\rho^{AB}-f^A\otimes\sigma^B$. We calculate the trace distance for different randomly generated free states using MATLAB, see Fig. \ref{fig_02}. It is verified that $M_\mathrm{tr}(\rho^{AB})>T_\mathrm{tr}^{AB}(\rho^{AB})$.
\end{eg}
Generally, it is difficult  to estimate the geometric measure of  QSBT $M_g$. The following connection between the geometric measure $M_g$ and the trace distance $M_{tr}$ leads to a lower bound of $M_g$.
\begin{figure}[htbp] \label{fig_02}
  \centering
    \includegraphics[width=0.5\textwidth]{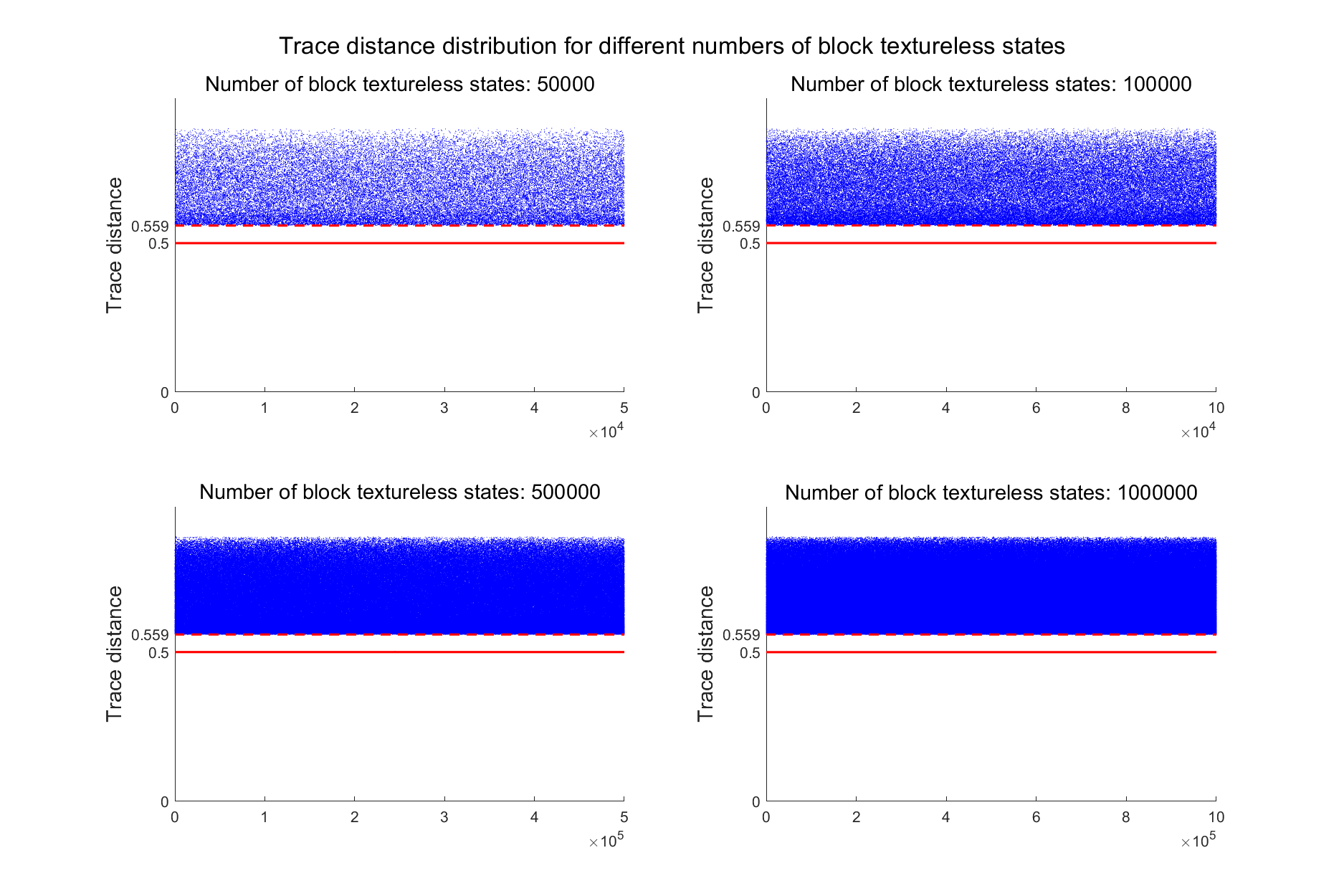}
    \caption{The red solid line represents the trace-distance measure of QSBT $\mathcal{T}_\mathrm{tr}^{AB}(\rho^{AB})$ induced by the trace-distance measure of QST $\mathcal{T}_\mathrm{tr}$. The blue points 
    in the four subplots represent the trace distance between the state $\rho^{AB}$ and the $5\times 10^4, 10^5, 5\times 10^5$ and $10^6 $ randomly generated block textureless states, respectively.}
    \label{fig_02}
\end{figure}
\begin{thm}\label{t5}
For any quantum state $\rho^{AB}$, we have 
\begin{equation}\label{e13}
M_g(\rho^{AB})\geqslant [M_{\tr}(\rho^{AB})]^2.
\end{equation}
\end{thm}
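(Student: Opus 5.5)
Our plan is to prove the inequality first for pure states and then lift it to mixed states through the convex roof definition (\ref{e9}) together with the convexity of $M_{\tr}$. The convexity of $M_{\tr}$ is part of it being a bona fide measure, as established in Proposition \ref{p3.8}.

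\emph{Pure states.} Fix a pure state $\ket{\psi^{AB}}$. The maximum in (\ref{e8}) is taken over pure free states $\ket{\phi^{AB}}$, and these have the form $\ket{f_1^A}\otimes\ket{\beta}$; it is attained at some such $\ket{\phi^{AB}}$ by compactness. For pure states the trace distance is
\[
\D(\psi,\phi)=\sqrt{1-\abs{\bk{\phi^{AB}}{\psi^{AB}}}^2}.
\]
Since $\kb{\phi^{AB}}{\phi^{AB}}$ lies in $\mathcal{T}_B(\mathcal{H}_A\otimes\mathcal{H}_B)$, we get
\[
M_{\tr}(\ket{\psi^{AB}})\leqslant \D(\psi,\phi)=\sqrt{M_g(\ket{\psi^{AB}})},
\]
which is $M_g(\ket{\psi^{AB}})\geqslant [M_{\tr}(\ket{\psi^{AB}})]^2$. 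Optimality of the pure free state is not needed here. Any feasible $\phi$ gives an upper bound for $M_{\tr}$, so choosing the optimal $\phi$ for $M_g$ suffices.

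\emph{Mixed states.} Take an optimal decomposition $\rho^{AB}=\sum_i p_i\kb{\psi_i^{AB}}{\psi_i^{AB}}$ achieving the minimum in (\ref{e9}). Then
\[
M_g(\rho^{AB})=\sum_i p_i M_g(\ket{\psi_i^{AB}})\geqslant \sum_i p_i [M_{\tr}(\psi_i)]^2\geqslant \Big(\sum_i p_i M_{\tr}(\psi_i)\Big)^2\geqslant [M_{\tr}(\rho^{AB})]^2 .
\]
The second inequality is Jensen's inequality for $x\mapsto x^2$. The third uses the convexity of $M_{\tr}$ together with monotonicity of squaring on nonnegative numbers. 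To see that convexity directly, pick for each $i$ a closest free state $f_1^A\otimes\sigma_i^B$. The mixture $f_1^A\otimes\sum_ip_i\sigma_i^B$ is again free, and the triangle inequality for the trace norm then gives $M_{\tr}(\rho^{AB})\leqslant\sum_ip_iM_{\tr}(\psi_i)$.

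\emph{Main obstacle.} The only delicate point is the pure-state identity $\D=\sqrt{1-|\langle\phi|\psi\rangle|^2}$. It is standard: $\kb{\psi}{\psi}-\kb{\phi}{\phi}$ is a rank-two traceless operator with eigenvalues $\pm\sqrt{1-|\langle\phi|\psi\rangle|^2}$.

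We must also confirm that restricting to pure free states in (\ref{e8}) is legitimate. This is fine, because all we use is that $\kb{\phi}{\phi}$ is a feasible point in the minimization (\ref{e15}).
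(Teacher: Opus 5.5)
Your proof is correct and follows essentially the same route as the paper: the pure-state identity $\D(\psi,\bar\phi)^2=1-|\langle\bar\phi|\psi\rangle|^2=M_g(\ket{\psi})$ for the optimal pure free state, then Jensen's inequality for $x\mapsto x^2$ on an optimal convex-roof decomposition, then the fact that $\sum_i p_i\bar\phi_i$ is block textureless. The only difference is cosmetic: you route the last step through convexity of $M_{\mathrm{tr}}$, whereas the paper applies joint convexity of the trace distance to the $\bar\phi_i$ directly, and your pure-state formula $\D=\sqrt{1-|\langle\phi|\psi\rangle|^2}$ is stated more cleanly than the paper's Frobenius-norm version.
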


\begin{proof}
For any two pure states $|\psi\rangle_{AB}$ and $|\phi\rangle_{AB}$, we have \cite{Coles2019}
\begin{equation}\label{e17}
\D(|\psi\rangle_{AB}, |\phi\rangle_{AB})^2 = \frac{1}{2} \norm{\kb{\psi^{AB}}{\psi^{AB}} - \kb{\phi^{AB}}{\phi^{AB}}}_{F}, 
\end{equation}
where $\norm{\cdot}_F$ is the Frobenius norm. Let $M_g(\ket{\psi^{AB}})=1-\abs{\bk{\bar{\phi}^{AB}}{\psi^{AB}}}^2$, where $\ket{\bar{\phi}^{AB}}\in\mathcal{T}_B(\mathcal{H}_A\otimes\mathcal{H}_B)$. Note that for the pure state $\ket{\bar{\phi}^{AB}}\in \mathcal{T}_B(\mathcal{H}_A\otimes\mathcal{H}_B)$,
\begin{align*}
\norm{\ket{\psi^{AB}} - \ket{\bar{\phi}^{AB}}}_F&=\Tr\left[\left(\kb{\psi^{AB}}{\psi^{AB}} -\kb{\bar{\phi}^{AB}}{\bar{\phi}^{AB}}\right)^2\right] \\
&= 2-2\Tr\left(\kb{\psi^{AB}}{\psi^{AB}}\cdot\kb{\bar{\phi}^{AB}}{\bar{\phi}^{AB}}\right) \\
&= 2 - 2\left| \bk{\psi^{AB}}{\bar{\phi}^{AB}} \right|^2.
\end{align*}
Substituting the above equation into Eq. (\ref{e17}), we get
\begin{equation} \label{e18}
\D(|\psi\rangle_{AB}, |\bar{\phi}^{AB}\rangle)^2 = M_g(|\psi\rangle_{AB}).
\end{equation}
Thus for general state $\rho^{AB}$, we have
\begin{align*}
M_g(\rho^{AB}) 
&= \sum_i p_i \D\left(|\psi^{AB}_i\rangle, |\bar{\phi}^{AB}_i\rangle\right)^2 \\
&\geqslant\left[\sum_i p_i \D(|\psi^{AB}_i\rangle, |\bar{\phi}^{AB}_i\rangle)\right]^2 \\
&\geqslant \D\left(\rho^{AB}, \sum_i p_i \kb{\bar{\phi}^{AB}_i}{\bar{\phi}^{AB}_i}\right)^2
\geqslant [M_{\tr}(\rho^{AB})]^2,
\end{align*}
where $\rho^{AB} = \sum\limits_i p_i |\psi^{AB}_i\rangle\langle\psi^{AB}_i|$ is the optimal pure state decomposition corresponding to $M_g$, $\sum\limits_i p_i \kb{\bar{\phi}^{AB}_i}{\bar{\phi}^{AB}_i}\in\mathcal{T}_B(\mathcal{H}_A\otimes\mathcal{H}_B)$, the second equality is due to Eq. (\ref{e18}), and the last inequality is from the definition of $M_{\tr}$.
\end{proof}

Similarly, we can also define the QSBT measure based on fidelity,
\begin{equation}
    M_F(\rho^{AB}) = 1-\max_{\sigma^{AB}\in\mathcal{T}_B(\mathcal{H}_A\otimes\mathcal{H}_B)}\sqrt{F}(\rho^{AB},\sigma^{AB}),
\end{equation}
where $\sqrt{F}(\rho,\sigma) = \Tr\sqrt{\sqrt{\rho}\sigma\sqrt{\rho}}$ is the root fidelity between the states $\rho$ and $\sigma$ \cite{nielsen, khatri2020}.

With respect to the relative entropy of QSBT, we have the following conclusion, see proof in appendix D.

\begin{prop}\label{p3.11}
The relative entropy of QSBT defined by
\begin{equation}\label{e20}
M_r(\rho^{AB}) = \min_{\sigma^{AB}\in\mathcal{T}_B(\mathcal{H}_{A B})}S(\rho^{AB}\|\sigma^{AB})
\end{equation}
is a strongly monotonic QSBT measure, where $S(\rho\|\sigma) = \Tr(\rho\log\rho) - \Tr(\rho\log\sigma)$ is the quantum relative entropy between the states $\rho$ and $\sigma$.
\end{prop}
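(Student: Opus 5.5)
We must prove that $M_r$ satisfies (T1), (T2), (T3) and (T4). The plan follows the standard relative-entropy argument used for coherence and for block coherence, adapted to the free set $\mathcal{T}_B(\mathcal{H}_A\otimes\mathcal{H}_B)=\{f_1^A\otimes\sigma^B\}$. We use that this set is convex and compact, which was shown above. Hence the minimum in (\ref{e20}) is attained, possibly with value $+\infty$ when supports are incompatible.

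\textbf{(T1).} Nonnegativity follows from Klein's inequality $S(\rho\|\sigma)\geqslant 0$, with equality iff $\rho=\sigma$. Thus $M_r(\rho^{AB})=0$ iff the minimizer equals $\rho^{AB}$, i.e.\ iff $\rho^{AB}\in\mathcal{T}_B$.

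\textbf{(T2).} Let $\Phi$ map $\mathcal{T}_B$ into itself and let $\sigma_*$ be an optimal free state for $\rho^{AB}$. The data-processing inequality gives
\[
M_r(\Phi(\rho^{AB}))\leqslant S(\Phi(\rho^{AB})\|\Phi(\sigma_*))\leqslant S(\rho^{AB}\|\sigma_*)=M_r(\rho^{AB}),
\]
because $\Phi(\sigma_*)\in\mathcal{T}_B$.

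\textbf{(T3).} Take optimal $\sigma_i$ for each $\rho_i$. Joint convexity of relative entropy gives
\[
S\Bigl(\sum_i p_i\rho_i\Big\|\sum_i p_i\sigma_i\Bigr)\leqslant\sum_i p_iS(\rho_i\|\sigma_i).
\]
Since $\sum_ip_i\sigma_i\in\mathcal{T}_B$ by convexity of the free set, this yields convexity of $M_r$.

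\textbf{(T4).} This is the main step, and I would copy the Baumgratz--Cramer--Plenio argument. Let $\{K_i\}$ satisfy $K_i\mathcal{T}_BK_i^\dagger\subseteq\mathcal{T}_B$ up to normalization. Build the instrument channel
\[
\Lambda(X)=\sum_i K_iXK_i^\dagger\otimes|i\rangle\langle i|_R,
\]
which is CPTP. Data processing gives
\[
S(\rho\|\sigma_*)\geqslant S(\Lambda(\rho)\|\Lambda(\sigma_*)).
\]
Write $q_i=\Tr(K_i\sigma_*K_i^\dagger)$ and $\sigma_i=K_i\sigma_*K_i^\dagger/q_i$. The flagged-state identity then gives
\[
S(\Lambda(\rho)\|\Lambda(\sigma_*))=\sum_ip_iS(\rho_i\|\sigma_i)+H(p\|q),
\]
where $H(p\|q)\geqslant 0$ is the classical relative entropy. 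By assumption each $\sigma_i\in\mathcal{T}_B$, so $S(\rho_i\|\sigma_i)\geqslant M_r(\rho_i)$. Combining, $M_r(\rho)\geqslant\sum_ip_iM_r(\rho_i)$.

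The obstacle I expect is a technical point: $\Lambda(\sigma_*)$ lives on the enlarged space $\mathcal{H}_{AB}\otimes\mathcal{H}_R$, so it is not itself a member of $\mathcal{T}_B$. This causes no trouble, because the argument only uses the data-processing inequality for $\Lambda$ and never requires $\Lambda$ to preserve the free set. Two edge cases need attention:
\begin{itemize}
\item[(a)] when $q_i=0$, the term is either irrelevant or makes the right side $+\infty$, so the inequality still holds;
\item[(b)] the normalization convention in (T4) must be read as $K_i\sigma K_i^\dagger\propto$ a free state.
\end{itemize}
With these handled, (T4) follows. Since strong monotonicity together with convexity also implies (T2) for channels whose Kraus operators are individually free, the proof is complete.
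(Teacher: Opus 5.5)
Your proposal is correct and follows essentially the same route as the paper: Klein's inequality and data processing for (T1)--(T2), and joint convexity with the convex combination of optimal free states for (T3). For (T4) both arguments use the Vedral--Plenio inequality $S(\rho\|\sigma)\geqslant\sum_n p_n S\bigl(\rho_n\big\|K_n\sigma K_n^\dagger/\Tr(K_n\sigma K_n^\dagger)\bigr)$ at an optimal free $\sigma_*$; the paper simply cites it, while you derive it from the flagged instrument channel and $H(p\|q)\geqslant 0$.

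Your caveat about $+\infty$ is in fact the generic case here. Since $f_1^A$ is pure, every free state is supported in $\ket{f_1^A}\otimes\mathcal{H}_B$, so $M_r(\rho^{AB})=+\infty$ for every non-free $\rho^{AB}$. Your extended-real edge-case handling, which the paper omits, is what keeps the argument sound.
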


Next, we present an analytical formula for the relative entropy measure of QSBT.

\begin{thm}
The relative entropy of QSBT $M_r$ can be expressed as
\begin{equation}\label{e21}
M_r(\rho^{AB})=S(f^A_1\otimes\Tr_A\rho^{AB})-S(\rho^{AB}),
\end{equation}
where $S(\rho)= \Tr(\rho\log\rho)$ is the von Neumann entropy.
\end{thm}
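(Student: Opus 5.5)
The plan is to compute $S(\rho^{AB}\|f_1^A\otimes\sigma^B)$ explicitly for an arbitrary free state and then minimize over $\sigma^B$. By the definition of $\mathcal{T}_B(\mathcal{H}_A\otimes\mathcal{H}_B)$ in Eq.~(\ref{e5}), every free state has the form $f_1^A\otimes\sigma^B$, where $f_1^A=\kb{f_1^A}{f_1^A}$ is a rank-one projector. Write $P=\kb{f_1^A}{f_1^A}\otimes \mathbb{I}_B$.

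\textbf{Step 1: handle the support issue.} Since $f_1^A$ is pure, $S(\rho^{AB}\|f_1^A\otimes\sigma^B)$ is finite only if $\operatorname{supp}\rho^{AB}\subseteq \operatorname{supp}(f_1^A\otimes\sigma^B)$. In particular this forces $\Tr_B\rho^{AB}=f_1^A$. By the fact established just before Theorem~\ref{t4.1}, this in turn gives $\rho^{AB}=f_1^A\otimes\rho^B$ with $\rho^B=\Tr_A\rho^{AB}$.

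So for any $\rho^{AB}$ outside this form, $M_r(\rho^{AB})=+\infty$. I would then check that the right-hand side of Eq.~(\ref{e21}) should be read consistently with this. This is the step I expect to be the main obstacle. Literally, $S(f_1^A\otimes\Tr_A\rho^{AB})=S(\rho^B)$ is finite, so the formula as written cannot hold for states with infinite $M_r$. I would therefore either restrict the statement to states with finite $M_r$, or reinterpret the free-state side appropriately, and I would flag this explicitly.

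\textbf{Step 2: the finite case.} Take $\rho^{AB}=f_1^A\otimes\rho^B$. The logarithm splits across the tensor product on the support:
\[
\log(f_1^A\otimes\sigma^B)=f_1^A\otimes\log\sigma^B \quad\text{on } \operatorname{supp}.
\]
This gives $-\Tr\rho^{AB}\log(f_1^A\otimes\sigma^B)=-\Tr\rho^B\log\sigma^B$. Using Klein's inequality (nonnegativity of relative entropy), we get
\[
-\Tr\rho^B\log\sigma^B\geqslant S(\rho^B)=S(f_1^A\otimes\rho^B),
\]
with equality if and only if $\sigma^B=\rho^B=\Tr_A\rho^{AB}$. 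Hence
\[
M_r(\rho^{AB})=S(f_1^A\otimes \Tr_A\rho^{AB})-S(\rho^{AB}),
\]
which here equals $0$, consistent with faithfulness.

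\textbf{Step 3: the general structure of the argument.} For a version that reads like the standard coherence formula, I would write
\[
S(\rho\|f_1^A\otimes\sigma^B)=-S(\rho)-\Tr\rho\log(f_1^A\otimes\sigma^B).
\]
I would then show that
\[
-\Tr\rho\log(f_1^A\otimes\sigma^B)=-\Tr[\Delta(\rho)\log(f_1^A\otimes\sigma^B)],
\]
where $\Delta(\rho)=f_1^A\otimes\Tr_A\rho$ plays the role of the dephasing map. Next I would add and subtract $S(\Delta(\rho))$ to obtain
\[
S(\rho\|f_1^A\otimes\sigma^B)=S(\Delta(\rho))-S(\rho)+S\bigl(\Delta(\rho)\,\|\,f_1^A\otimes\sigma^B\bigr),
\]
and minimize the last term to zero by choosing $\sigma^B=\Tr_A\rho$.

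The delicate point is the first identity in this step. It requires $\Tr[\rho\,(f_1^A\otimes X)]=\Tr[\Tr_A\rho\; X]$, whereas the correct relation is $\Tr[\rho\,(f_1^A\otimes X)]=\Tr[\bra{f_1^A}\rho\ket{f_1^A}X]$. So I would first try to verify whether $\bra{f_1^A}\rho\ket{f_1^A}$ can legitimately replace $\Tr_A\rho$. Where it cannot, I would fall back on the support analysis of Step 1 and state the formula in the corrected form.
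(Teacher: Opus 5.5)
Your proposal does not prove the displayed formula, and that is the right outcome: the formula is false, and your Step~1 diagnosis is correct. Your Step~3 is exactly the paper's argument. It adds and subtracts $S(f_1^A\otimes\Tr_A\rho^{AB})$, discards the cross term via Eq.~(\ref{e22}), and then makes $S(f_1^A\otimes\Tr_A\rho^{AB}\,\|\,f_1^A\otimes\tau^B)$ vanish by taking $\tau^B=\Tr_A\rho^{AB}$. It breaks precisely at the identity you singled out.

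In its proof of Eq.~(\ref{e22}) (Appendix~E) the paper writes, ``on the support'', $\log(f_1^A\otimes\tau^B)=(\log f_1^A)\otimes\mathbb{I}_B+\mathbb{I}_A\otimes\log\tau^B=\mathbb{I}_A\otimes\log\tau^B$. But $f_1^A$ is a rank-one projector, so its support projector is $f_1^A$ itself, not $\mathbb{I}_A$. The correct restriction is $f_1^A\otimes\log\tau^B$, and with it the cross term becomes $\Tr[(\bra{f_1^A}\rho^{AB}\ket{f_1^A}-\Tr_A\rho^{AB})\log\tau^B]$, which does not vanish in general.

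As for your question, expand $\Tr_A\rho^{AB}=\sum_k\bra{k}\rho^{AB}\ket{k}$ in an orthonormal basis of $\mathcal{H}_A$ that contains $\ket{f_1^A}$. This shows that $\Tr_A\rho^{AB}-\bra{f_1^A}\rho^{AB}\ket{f_1^A}$ is a positive operator. It is zero if and only if $\rho^{AB}$ is supported on $\ket{f_1^A}\otimes\mathcal{H}_B$, i.e.\ if and only if $\rho^{AB}\in\mathcal{T}_B(\mathcal{H}_A\otimes\mathcal{H}_B)$. So the replacement is legitimate exactly on the free set. Off the free set, $S(\rho^{AB}\|f_1^A\otimes\tau^B)=+\infty$ for every $\tau^B$, as you said.

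There are two ways to tighten your write-up. First, do not hedge between ``restricting'' and ``reinterpreting''. Your Steps~1--2 already prove the correct statement: $M_r(\rho^{AB})=0$ for $\rho^{AB}\in\mathcal{T}_B(\mathcal{H}_A\otimes\mathcal{H}_B)$ and $M_r(\rho^{AB})=+\infty$ otherwise. Restricting the theorem to states with finite $M_r$ therefore leaves only the trivial identity $0=0$. Step~3 should be presented as the explanation of why the naive argument fails, not as a proof route.

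Second, add a counterexample that does not depend on the $+\infty$ convention. Take $\rho^{AB}=\frac{1}{m}\mathbb{I}_A\otimes\kb{0}{0}$ and use the usual sign $S(\rho)=-\Tr(\rho\log\rho)$, which is the one the paper's proof and its Bell-state example actually use. The right-hand side then equals $0-\log m<0$, contradicting the nonnegativity of $M_r$ claimed in Proposition~\ref{p3.11}. With the sign literally printed in the statement, a two-qubit Bell state gives $-\log 2$ instead.

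The same support argument shows that the paper's subsequent Bell-state example is also wrong. Since $\Tr_B(\kb{\Phi^{+}}{\Phi^{+}})=\mathbb{I}/2\neq f_1^A$, we get $M_r(\ket{\Phi^{+}})=+\infty$, not $1$.
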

\begin{proof}
From the definition in Eq. (\ref{e20}), we have
\begin{widetext}
\begin{align*}
M_r(\rho^{AB}) &= \min_{\tau^B\in\mathcal{D}(\mathcal{H}_B)}S(\rho^{AB}\|f^A_1\otimes\tau^B) \\
&=\min_{\tau^B\in\mathcal{D}(\mathcal{H}_B)}\{\Tr(\rho^{AB}\log\rho^{AB})-\Tr(\rho^{AB}\log(f^A_1\otimes\tau^B))\} \\
&= \min_{\tau^B\in\mathcal{D}(\mathcal{H}_B)}\{\Tr(\rho^{AB}\log\rho^{AB})-\Tr((f^A_1\otimes\Tr_A\rho^{AB})\log(f^A_1\otimes\tau^B))-\Tr((\rho^{AB}-f^A_1\otimes\Tr_A\rho^{AB})\log(f^A_1\otimes\tau^B))\} \\
&=\min_{\tau^B\in\mathcal{D}(\mathcal{H}_B)}\left\{\Tr(\rho^{AB}\log\rho^{AB})-\Tr((f^A_1\otimes\Tr_A\rho^{AB})\log(f^A_1\otimes\Tr_A\rho^{AB}))+\Tr((f^A_1\otimes\Tr_A\rho^{AB})\log(f^A_1\otimes\Tr_A\rho^{AB}))\right.\\
&\quad \left.-\Tr((f^A_1\otimes\Tr_A\rho^{AB})\log(f^A_1\otimes\tau^B))-\Tr((\rho^{AB}-f^A_1\otimes\Tr_A\rho^{AB})\log(f^A_1\otimes\tau^B))\right\}\\
&=\min_{\tau^B\in\mathcal{D}(\mathcal{H}_B)}\{-S(\rho^{AB})
+S(f^A_1\otimes\Tr_A\rho^{AB})+S(f^A_1\otimes\Tr_A\rho^{AB}\|f^A_1\otimes\tau^B)
-\Tr((\rho^{AB}-f^A_1\otimes\Tr_A\rho^{AB})\log(f^A_1\otimes\tau^B))\},
\end{align*}
in which the term
\begin{equation}\label{e22}
\Tr((\rho^{AB}-f^A_1\otimes\Tr_A\rho^{AB})\log(f^A_1\otimes\tau^B))=0
\end{equation}
for all $\tau^B\in\mathcal{D}(\mathcal{H}_B)$, see proof in the appendix E. Hence,
\begin{equation*}
M_r(\rho^{AB})=\min_{\tau^B\in\mathcal{D}(\mathcal{H}_B)}\{S(f^A_1\otimes\Tr_A\rho^{AB})
-S(\rho^{AB})+S(f^A_1\otimes\Tr_A\rho^{AB}\|f^A_1\otimes\tau^B)\}=S(f^A_1\otimes\Tr_A\rho^{AB})-S(\rho^{AB}).
\end{equation*}
\end{widetext}
\end{proof}
\begin{eg}
Let us consider the two-qubit Bell states, $\ket{\Phi^{\pm}}=(\ket{00}\pm\ket{11})/\sqrt{2},\ket{\Psi^{\pm}}=(\ket{01}\pm\ket{10})/\sqrt{2}$. By direct calculation, we have $\Tr_A(\ket{\Phi^{\pm}})=\Tr_A(\ket{\Psi^{\pm}})=\frac{\mathbb{I}_2}{2}$. Therefore, from Eq. (\ref{e21}) we obtain $M_r(\ket{\Phi^{\pm}})=M_r(\ket{\Psi^{\pm}})=1$. The block texture of the four Bell states are equal under the relative entropy measure. Note that the texture of the four Bell states is $+\infty$ \cite{wang2025}. In this sense the quantum-state block texture proposed in this paper gives rise to a more reasonable characterization than the one presented in Ref. \cite{wang2025}.
\end{eg}
\section{Conclusion}
We have generalized the quantum-state texture (QST) to quantum-state block texture (QSBT), and
established alternative frameworks for quantifying the block texture. For any given QST measure, we have constructed a corresponding QSBT measure based on the reduced systems. We have proposed measures for QSBT based on the concave functions and convex roof construction. We have also constructed several QSBT measures including the geometric measure of QSBT ($M_g$), the trace distance-based measure of QSBT ($M_{\tr}$), the fidelity-based measure of QSBT ($M_F$), and the relative entropy of QSBT ($M_r$). We have demonstrated that the geometric measure $M_g$ serves as an upper bound for the trace distance-based measure $M_{\tr}$. For a specific family of quantum states, we have evaluated the trace distance-based measures in detail. Our results show that these distinct QSBT measures play different roles in quantifying the block texture of quantum states.

In fact our results can be readily generalized to the case of arbitrary multipartite quantum states, by implementing bipartitions and thereby reducing the problem of characterizing multipartite quantum states to the bipartite scenario. Correspondingly, multiple definitions of block texture emerge for multipartite quantum states, which enables the systematic investigation of key issues including the monogamy and polygamy properties of quantum-state block texture (QSBT) measures. In general, QSBT furnishes a unifying and structure-driven framework for quantumnesses such as entanglement and other quantum correlations, with promising applications in classifying quantum phases, optimizing quantum simulation protocols and elucidating emergent phenomena in multipartite quantum systems.

\section*{Acknowledgments:}  
S.M. Fei acknowledges the financial support from specific research fund of the
Innovation Platform for Academicians of Hainan Province.

\section*{Data availability}
No data were created or analyzed in this study.

\section*{Appendices}
\subsection{Proof of Theorem \ref{t3.4}}
1. Non-negative: It is clear that for any pure state $\ket{\psi^{AB}}$, $\mathcal{P}_f(\ket{\psi^{AB}})\geqslant 0$, with the equality holding if and only if $\ket{\psi^{AB}}\in\mathcal{T}_B(\mathcal{H}_A\otimes\mathcal{H}_B)$.

2. Monotonicity: For any free operation $\Lambda\equiv\{K^{AB}_n\}$ and pure state $\ket{\psi^{AB}}$, one has \cite{parisio 1}
\begin{align}
    &\sum_n\Tr[\kb{f^A_1}{f^A_1}\Tr_B(K^{AB}_n\kb{\psi^{AB}}{\psi^{AB}}K^{AB\dagger}_n)] \nonumber \\
&\ \ \ \ \geqslant\Tr[\kb{f^A_1}{f^A_1}\Tr_B(\kb{\psi^{AB}}{\psi^{AB}})],
\end{align}
i.e.,
\begin{align}
    &\sum_n\bra{f^A_1}\Tr_B[K^{AB}_n\kb{\psi^{AB}}{\psi^{AB}}K^{AB\dagger}_n]\ket{f^A_1} \nonumber \\
&\ \ \ \ \geqslant\bra{f^A_1}\Tr_B[\kb{\psi^{AB}}{\psi^{AB}}]\ket{f^A_1}.
\end{align}
It follows that
\begin{widetext}
\begin{align*}
    \mathcal{P}_f(\psi^{AB})&= f(\bra{f^A_1}\Tr_B[\kb{\psi^{AB}}{\psi^{AB}}]\ket{f^A_1}) \\
    &\geqslant f\left(\sum_n\bra{f^A_1}\Tr_B[K^{AB}_n\kb{\psi^{AB}}{\psi^{AB}}K^{AB\dagger}_n]\ket{f^A_1}\right) \\
    &\geqslant \sum_n \Tr[K^{AB}_n\kb{\psi^{AB}}{\psi^{AB}}K^{AB\dagger}_n]f\left(\frac{\bra{f^A_1}\Tr_B[K^{AB}_n\kb{\psi^{AB}}{\psi^{AB}}K^{AB\dagger}_n]\ket{f^A_1}}{\Tr[K^{AB}_n\kb{\psi^{AB}}{\psi^{AB}}K^{AB\dagger}_n]}\right) \\
    &= \sum_n\Tr[K^{AB}_n\kb{\psi^{AB}}{\psi^{AB}}K^{AB\dagger}_n]\mathcal{P}\left(\frac{K^{AB}_n\kb{\psi^{AB}}{\psi^{AB}}K^{AB\dagger}_n}{\Tr[K^{AB}_n\kb{\psi^{AB}}{\psi^{AB}}K^{AB\dagger}_n]}\right).
\end{align*}
\end{widetext}

For a mixed state $\rho^{AB}$, let $\rho^{AB}=\sum\limits_i p_i\kb{\psi^{AB}}{\psi^{AB}}$ be an optimal pure-state ensemble, i.e.,
\begin{equation*}
    \mathcal{P}_f(\rho^{AB})=\sum_i p_i \mathcal{P}_f(\ket{\psi^{AB}}).
\end{equation*}
Thus, we have
\begin{widetext}
\begin{align*}
    \mathcal{P}_f(\Lambda(\rho^{AB})) &= \mathcal{P}(\sum_n K^{AB}_n\rho^{AB}K^{AB\dagger}_n) \\
    &=\mathcal{P}_f(\sum_n\sum_i p_i K^{AB}_n\kb{\psi^{AB}_i}{\psi^{AB}_i}K^{AB\dagger}_n) \\
    &=\mathcal{P}_f\left(\sum_n\sum_i p_i\Tr[K^{AB}_n\kb{\psi^{AB}_i}{\psi^{AB}_i}K^{AB\dagger}_n]\frac{K^{AB\dagger}_n\kb{\psi^{AB}_i}{\psi^{AB}_i}K^{AB\dagger}_n}{\Tr[K^{AB\dagger}_n\kb{\psi^{AB}_i}{\psi^{AB}_i}K^{AB\dagger}_n]}\right) \\
    &\leqslant \sum_n\sum_i p_i\Tr[K^{AB}_n\kb{\psi^{AB}_i}{\psi^{AB}_i}K^{AB\dagger}_n]\mathcal{P}\left(\frac{K^{AB}_n\kb{\psi^{AB}_i}{\psi^{AB}_i}K^{AB\dagger}_n}{\Tr[K^{AB}_n\kb{\psi^{AB}_i}{\psi^{AB}_i}K^{AB\dagger}_n]}\right) \\
    &\leqslant \sum_i p_i \mathcal{P}_f(\ket{\psi^{AB}_i}) =\mathcal{P}_f(\rho^{AB}),
\end{align*}
\end{widetext}
which implies the monotonicity of $\mathcal{P}$ under free operations.

3. Convexity: For any set of states $\{\rho^{AB}_i\}$ and any probability distribution $\{p_i\}$, let $\rho^{AB}_i=\sum_j q_{ij} \rho^{AB}_{ij}$ be the optimal pure-state ensemble for $\rho^{AB}_i$, i.e.,
$\mathcal{P}_f(\rho^{AB}_i)=\sum_j q_{ij}\mathcal{P}_f(\rho^{AB}_{ij}).$
Then
$$\begin{array}{llll}
    \mathcal{P}_f(\sum_i p_i\rho^{AB}_i)&= \mathcal{P}_f(\sum_i\sum_j p_i q_{ij}\rho^{AB}_{ij}) \\
    &\leqslant \sum_i \sum_j p_i q_{ij}\mathcal{P}_f(\rho^{AB}_{ij}) \\
    &=\sum_i p_i \mathcal{P}(\rho^{AB}_i).
\end{array}
$$

\subsection{Proof of Proposition \ref{p3.7}}
It is easy to see that for a pure state $\ket{\psi^{AB}}$, the definition of the geometric measure of QSPT can be rewritten as
\begin{equation}
    M_g(\ket{\psi^{AB}}) = 1-\max_{\ket{\gamma^{AB}}\in\mathcal{T}_B(\mathcal{H}_A\otimes\mathcal{H}_B)}F\left(\ket{\psi^{AB}},\ket{\gamma^{AB}}\right),
\end{equation}
where $F(\rho,\sigma) = \left(\Tr\sqrt{\sqrt{\rho}\sigma\sqrt{\rho}}\right)^2$ is the fidelity between the states $\rho$ and $\sigma$ \cite{nielsen,khatri2020}.
It is clear that $M_g(\ket{\psi^{AB}}) \geqslant 0$, with equality holding if and only if $\ket{\psi^{AB}} \in \mathcal{F}$. Let $M_g(\ket{\psi^{AB}})=1-F(\ket{\psi^{AB}},\ket{\overline{\gamma}^{AB}})$. For any free operation $\mathcal{E}$, we have
\begin{align*}
    M_g(\mathcal{E}(\ket{\psi^{AB}})) &= 1-\max_{\ket{\gamma^{AB}}\in\mathcal{T}_B(\mathcal{H}_A\otimes\mathcal{H}_B)}F(\mathcal{E}(\ket{\psi^{AB}}),\ket{\gamma^{AB}}) \\
    &\leqslant 1-F(\mathcal{E}(\ket{\psi^{AB}}),\mathcal{E}(\ket{\overline{\gamma}^{AB}})) \\
    &\leqslant 1-F(\ket{\psi^{AB}},\ket{\overline{\gamma}^{AB}})\\
    & = M_g(\ket{\psi^{AB}}).
\end{align*}

Let $\rho^{AB}=\sum_i p_i\kb{\overline{\psi}^{AB}_i}{\overline{\psi}^{AB}_i}$ be the optimal decomposition for a mixed state $\rho^{AB}$. Then $\mathcal{E}(\rho^{AB})=\sum_i p_i\mathcal{E}(\kb{\overline{\psi}^{AB}_i}{\overline{\psi}^{AB}_i})=\sum_i \sum_k p_i q_{i,k}\kb{\phi^{AB}_{i,k}}{\phi^{AB}_{i,k}}$, where $\mathcal{E}(\kb{\overline{\psi}^{AB}_i}{\overline{\psi}^{AB}_i})=\sum_k q_{i,k}\kb{\phi^{AB}_{i,k}}{\phi^{AB}_{i,k}}$ is the optimal decomposition of $\mathcal{E}(\kb{\overline{\psi}^{AB}_i}{\overline{\psi}^{AB}_i})$ for any $i$. Therefore,
\begin{align*}
    M_g(\mathcal{E}(\rho^{AB})) &\leqslant \sum_i p_i q_{i,k}M_g(\ket{\phi^{AB}_{i,k}}) \\
    &= \sum_i p_i M_g(\mathcal{E}(\kb{\overline{\psi}^{AB}_i}{\overline{\psi}^{AB}_i})) \\
    &\leqslant \sum_i p_i M_g(\ket{\overline{\psi}^{AB}_i}) \\
    &= M_g(\rho^{AB}).
\end{align*}

Finally, for any states $\{\rho_k^{AB}\}$ and any probability distribution $\{q_k\}$, let $\rho^{AB}_k=\sum_i p_{k,i}\kb{\overline{\psi}^{AB}_{k,i}}{\overline{\psi}^{AB}_{k,i}}$ be the optimal decomposition of $\rho^{AB}_k$ for all $k$. Thus, $\sum_k q_k \rho^{AB}_k =\sum_k\sum_i q_k p_{k,i} \kb{\overline{\psi}^{AB}_{k,i}}{\overline{\psi}^{AB}_{k,i}}$ is a decomposition of the state $\sum_k q_k \rho^{AB}_k$, and we have
\begin{align*}
    M_g\left(\sum_k q_k \rho^{AB}_k\right) &\leqslant \sum_k\sum_i q_k p_{k,i} M_g(\ket{\overline{\psi}^{AB}_{k,i}}) \\
    &= \sum_k q_k \sum_i p_{k,i} M_g(\ket{\overline{\psi}^{AB}_{k,i}}) \\
    &= \sum_k q_k M_g(\rho^{AB}_k).
\end{align*}

\subsection{Proof of Proposition \ref{p3.8}}
For any quantum state $\rho^{AB}$, it is evident that $M_{\mathrm{tr}}(\rho^{AB}) \geqslant 0$, with the equality holding if and only if $\rho^{AB} \in \mathcal{T}_B(\mathcal{H}_A\otimes\mathcal{H}_B)$. For any free operation $\Lambda$, we have
    \begin{align*}
        M_{\mathrm{tr}}(\Lambda(\rho^{AB}))
        &= \min_{\sigma^{AB}\in\mathcal{T}_B(\mathcal{H}_A\otimes\mathcal{H}_B)}\frac{1}{2}\norm{\Lambda(\rho^{AB})-\sigma^{AB}}_1 \\
    &\leqslant \min_{\sigma^{AB}\in\mathcal{T}_B(\mathcal{H}_A\otimes\mathcal{H}_B)}\frac{1}{2}\norm{\Lambda(\rho^{AB})-\Lambda(\sigma^{AB})}_1 \\
        &\leqslant \min_{\sigma^{AB}\in\mathcal{T}_B(\mathcal{H}_A\otimes\mathcal{H}_B)}\frac{1}{2}\norm{\rho^{AB}-\sigma^{AB}}_1 \\
        &= M_{\mathrm{tr}}(\rho^{AB}),
    \end{align*}
    where the first and the second equalities follow from the definition of $M_{\mathrm{tr}}$, the first inequality follows from the fact that $\Lambda(\sigma^{AB}) \in\mathcal{T}_B(\mathcal{H}_A\otimes\mathcal{H}_B)$, and the second inequality follows from the contractivity of the trace distance under completely positive and trace-preserving maps, i.e., $\norm{\Lambda(\rho)-\Lambda(\sigma)}_1 \leqslant \norm{\rho-\sigma}_1$ for any quantum states $\rho$ and $\sigma$.
    
Let $\sigma^*_i$ be the optomal block textureless state that minimizes the distance to $\rho_i^{AB}$ for all $i$. For any probability distribution $\{p_i\}$, we have
    \begin{align*}
        M_{\mathrm{tr}}\left(\sum_ip_i\rho_i^{AB}\right)
    &\leqslant \frac{1}{2}\norm{\sum_ip_i\rho_i^{AB}-\sum_ip_i\sigma^*_i}_1 \\
        &\leqslant \sum_ip_i\cdot\frac{1}{2}\norm{\rho_i^{AB}-\sigma^*_i}_1 \\
        &= \sum_ip_iM_{\mathrm{tr}}(\rho_i^{AB}),
    \end{align*}
    where the first inequality follows from the definition of $M_{\mathrm{tr}}$ and the fact that $\sum_i p_i \sigma^*_i$ is a textureless state, the second inequality follows from the convexity of the trace norm, and the equality follows from the definition of $\sigma^*_i$.       

\subsection{Proof of Proposition \ref{p3.11}}

Clearly, the relative entropy of QSBT is nonnegative and $M_r(\rho^{AB})=0$ if and only if $\rho^{AB} \in \mathcal{T}_B(\mathcal{H}_A\otimes\mathcal{H}_B)$ is a block textureless state. 

Let $\Lambda$ be a free operation. We have
\begin{align*}
M_r(\Lambda(\rho^{AB})) & =\min_{\sigma^{AB}\in\mathcal{T}_B(\mathcal{H}_A\otimes\mathcal{H}_B)}S(\Lambda(\rho^{AB})\|\sigma^{AB}) \\
&\leqslant \min_{\sigma^{AB}\in\mathcal{T}_B(\mathcal{H}_A\otimes\mathcal{H}_B)}S(\Lambda(\rho^{AB})\|\Lambda(\sigma^{AB})) \\
&\leqslant \min_{\sigma^{AB}\in\mathcal{T}_B(\mathcal{H}_A\otimes\mathcal{H}_B)} S(\rho^{AB}\|\sigma^{AB}) \\
&= M_r(\rho^{AB}),
\end{align*}
where the first and the second equalities follow from the definition of $M_r$, the first inequality follows from the fact that $\Lambda(\mathcal{T}_B(\mathcal{H}_A\otimes\mathcal{H}_B))\subseteq\mathcal{T}_B(\mathcal{H}_A\otimes\mathcal{H}_B)$, and the second inequality follows from the fact that the quantum relative entropy is contractive under CPTP maps \cite{lindblad1975,ruskai2002}.

For any quantum state $\rho^{AB}_i$, there exists a block textureless state $\sigma^{AB}_i\in\mathcal{T}_B(\mathcal{H}_A \otimes \mathcal{H}_B)$ such that
\begin{equation}
M_r(\rho^{AB}_i)=S(\rho^{\rho}_i\|\sigma^{AB}_i).
\end{equation}

Hence, we have
\begin{align*}
M_r(\sum_i p_i \rho^{AB}_i) &\leqslant S\left(\sum_i p_i \rho^{AB}_i \middle\| \sum_i p_i \sigma^{AB}_i\right) \\
&\leqslant \sum_i p_i S\left(\rho^{AB}_i \middle\| \sigma^{AB}_i\right) \\
&= \sum_i p_i M_r(\rho^{AB}_i),
\end{align*}
where the first inequality follows from the definition of $M_r$ and the fact that $\sum_i p_i \sigma^{AB}_i$ is a block textureless state, the second inequality follows from the fact that the quantum relative entropy is jointly convex \cite{lindblad1975,ruskai2002}.

As a quantum relative entropy satisfies the conditions (F1)-(F5) and Eq. (22) in Ref. \cite{vedral1998}, we have 
\begin{equation}
        S(\rho \|\sigma)\geqslant\sum_n p_n S\left(\rho_n\middle\|\frac{K_n \sigma K_n^\dagger}{\Tr(K_n \sigma K_n^\dagger)}\right),
\end{equation}
where $\rho_n= K_n\rho K_n^\dagger / p_n$ and $p_n = \Tr(K_n \rho K_n^\dagger)$ with $\{K_n\}$ is a set of Kraus operators corresponding to a CPTP map. Then we have
\begin{align*}
    M_r(\rho^{AB}) &= S(\rho^{AB}\|\sigma_*^{AB}) \\
    &\geqslant \sum_n p_n S\left(\rho^{AB}_n\middle\|\frac{K_n \sigma^{AB}_* K_n^\dagger}{\Tr(K_n \sigma^{AB}_* K_n^\dagger)}\right) \\
    &\geqslant \sum_n p_n \min_{\sigma^{AB}\in\mathcal{T}_B(\mathcal{H}_A\otimes\mathcal{H}_B)}S(\rho^{AB}_n\|\sigma^{AB}) \\
    &= \sum_n p_n M_r(\rho^{AB}_n).
\end{align*}

\subsection{Proof of Eq. \eqref{e22}}
We express $\rho^{AB}$ as
\begin{equation*}
\rho^{AB}=\sum_i\sum_j\ketbra{i}{j}_A\otimes\rho^B_{i,j},
\end{equation*}
so $\Tr_A\rho^{AB}=\sum_k\rho^B_{k,k}$.
Therefore, we have
\begin{equation*}
f^A_1\otimes\Tr_A\rho^{AB}=\sum_{i,j}\left(\ketbra{i}{j}_A\otimes\frac{1}{d_A}\sum_k\rho^B_{k,k}\right),
\end{equation*}
thus
\begin{equation*}
\rho^{AB}-f^A_1\otimes\Tr_A\rho^{AB}=\sum_{i,j}\left(\ketbra{i}{j}_A\otimes\left(\rho^B_{i,j}-\frac{1}{d_A}\sum_k\rho^B_{k,k}\right)\right).
\end{equation*}

On the support of the block textureless state $\sigma^{AB}=f^A_1\otimes\tau^B$, we can derive
\begin{align*}
\log(f^A_1\otimes\tau^B)&=(\log f^A_1)\otimes\mathbb{I}_B+\mathbb{I}_A\otimes(\log\tau^B) \\
&=\mathbb{I}_A\otimes(\log\tau^B) \\
&=\sum_m\left(\ketbra{m}{m}_A\otimes\log\tau^B\right),
\end{align*}
hence
\begin{align*}
&\quad(\rho^{AB}-f^A_1\otimes\Tr_A\rho^{AB})\log(f^A_1\otimes\tau^B) \\&=\sum_{i,j,k}\left(|i \rangle\langle j|m \rangle\langle m|_A\otimes\left(\rho^B_{i,j}-\frac{1}{d_A}\sum_k\rho^B_{k,k}\right)\log\tau^B\right).
\end{align*}
Taking the trace yields
\begin{align*}
&\quad\Tr((\rho^{AB}-f^A_1\otimes\Tr_A\rho^{AB})\log(f^A_1\otimes\tau^B)) \\
&=\Tr\left(\left[\sum_m\left(\rho^B_{m,m}-\frac{1}{d_A}\sum_k\rho^B_{k,k}\right)\right]\log\tau^B\right) \\
&=0
\end{align*}

\end{document}